  \providecommand\BibTeX{{%
    \normalfont B\kern-0.5em{\scshape i\kern-0.25em b}\kern-0.8em\TeX}}}
\newtheorem{remark}{Remark}
\newcommand{\g}{\mathbf{g}}
\newcommand{\w}{\mathbf{w}}
\newcommand{\y}{\mathbf{y}}
\newtheorem{thm}{Theorem}
\newtheorem{lem}[thm]{Lemma}
\begin{document}

\title{On Scheduling Ring-All-Reduce Learning Jobs in Multi-Tenant GPU Clusters with Communication Contention}

\author{Menglu Yu,$^{1}$ Bo Ji,$^{2}$ Hridesh Rajan,$^{1}$ and Jia Liu$^{3,1}$
}
\affiliation{
	\institution{$^{1}$Department of Computer Science, Iowa State University}
	\institution{$^{2}$Department of Computer Science, Virginia Tech}
	\institution{$^{3}$Department of Electrical and Computer Engineering, The Ohio State University}
	\country{}
}

\renewcommand{\shortauthors}{Yu and Liu, et al.}


\begin{abstract}
Powered by advances in deep learning (DL) techniques, machine learning and artificial intelligence have achieved astonishing successes.
However, the rapidly growing needs for DL also led to communication- and resource-intensive distributed training jobs for large-scale DL training, which are typically deployed over GPU clusters.
To sustain the ever-increasing demand for DL training, the so-called ``ring-all-reduce'' (RAR) technologies have recently emerged as a favorable computing architecture to efficiently process network communication and computation load in GPU clusters.
The most salient feature of RAR is that it removes the need for dedicated parameter servers, thus alleviating the potential communication bottleneck.
However, when multiple RAR-based DL training jobs are deployed over GPU clusters, communication bottlenecks could still occur due to contentions between DL training jobs. 
So far, there remains a lack of theoretical understanding on how to design contention-aware resource scheduling algorithms for RAR-based DL training jobs, which motivates us to fill this gap in this work.
Our main contributions are three-fold: i) We develop a new analytical model that characterizes both communication overhead related to the worker distribution of the job and communication contention related to the co-location of different jobs; 
ii) Based on the proposed analytical model, we formulate the problem as a non-convex integer program to minimize the makespan of all RAR-based DL training jobs.
To address the unique structure in this problem that is not amenable for optimization algorithm design, we reformulate the problem into an integer linear program that enables provable approximation algorithm design called SJF-BCO (\ul{S}mallest \ul{J}ob \ul{F}irst with \ul{B}alanced \ul{C}ontention and \ul{O}verhead); 
and iii) We conduct extensive experiments to show the superiority of SJF-BCO over existing schedulers.
Collectively, our results contribute to the state-of-the-art of distributed GPU system optimization and algorithm design.
\end{abstract}

\begin{CCSXML}
<ccs2012>
   <concept>
       <concept_id>10010147.10010919.10010172</concept_id>
       <concept_desc>Computing methodologies~Distributed algorithms</concept_desc>
       <concept_significance>500</concept_significance>
       </concept>
   <concept>
       <concept_id>10003033.10003079.10011672</concept_id>
       <concept_desc>Networks~Network performance analysis</concept_desc>
       <concept_significance>500</concept_significance>
       </concept>
 </ccs2012>
\end{CCSXML}

\ccsdesc[500]{Computing methodologies~Distributed algorithms}
\ccsdesc[500]{Networks~Network performance analysis}


\maketitle


\section{Introduction} \label{sec:intro}

{\bf Background and Motivation:} 
In recent years, the rise of deep learning has driven an ever-increasing need for large-scale distributed training in GPU clusters, which leverages massive parallelism to speed up the training processes.
This has been evidenced by the popularity of several prevailing distributed deep learning (DDL) frameworks (e.g., TensorFlow~\cite{TensorFlow} and PyTorch~\cite{Paszke19:PyTorch}).
In these DDL frameworks, the traditional and most widely adopted computing-networking structure is based on the sever-worker (SW) architecture, where DDL training jobs are decomposed into and executed in parallel by a set of workers under the coordination of a parameter server.
However, 
as the number of workers increases, the SW architecture suffers from serious scalability limitations since the server acts as a communication bottleneck and a single-point-of-failure.
To address the scalability limitations of the SW architecture, the ring-all-reduce (RAR)~\cite{Patarasuk09:Ring-AllReduce} architecture has attracted increasing attention in recent years.
The key idea of RAR is that, by forming a ring and working collaboratively, the workers can update the learning model parameters {\em without} needing any parameter server, thus removing the communication bottleneck and alleviating the single point of failure.
Moreover, it can be shown that the RAR architecture enjoys the highly desirable ``bandwidth optimality'' in the sense that, as the number of workers increases, the amount of information exchanged in the network is asymptotically {\em independent} of the number of workers (see Section~\ref{sec:Preliminaries} for details). 

However, despite all these salient features, the performance of deploying RAR-based training jobs in {\em multi-tenant} GPU clusters remains far from being satisfactory in practice~\cite{Wang20:contention}.
The fundamental reason is that the bandwidth optimality of RAR architecture only happens when there is only a single training job in the system (i.e.,  a contention-free environment).
In a multi-tenant GPU cluster, however, such an ideal contention-free condition is rarely satisfied.
As a result, significant communication bottleneck links could occur when deploying RAR-based training jobs in the system.
For example, researchers in~\cite{Wang20:contention} have found that on a cluster of four-GPU servers connected by 10 Gbps Ethernet, when only one RAR training job is executed with four GPUs in the cluster, the job completion time is 295 seconds.
In comparison, when four jobs of the same type are executed simultaneously with each job still using four GPUs but scheduled across GPU servers, each job's completion time dramatically increases to 675 seconds due to the extensive communication contention.
These empirical performance results of RAR indicate that developing efficient and effective scheduling for RAR-based DDL training jobs is well warranted to mitigate contention-induced communication bottlenecks.
However, in the literature so far, there remains a lack of theoretical understanding on how to design contention-aware resource scheduling algorithms for RAR-based DDL training jobs.
In light of the rapidly growing importance of RAR-based DDL deployment, our goal in this paper is to fill this gap and develop contention-aware scheduling algorithms for RAR-based training jobs in multi-tenant GPU clusters.

%

\smallskip
{\bf Technical Challenges:}
We note, however, that due to a number of technical difficulties, developing contention-aware scheduling algorithms for RAR-based DDL jobs in multi-tenant GPU clusters is highly challenging.
\emph{First and foremost}, just as any network optimization problems that deal with contentions and interferences, the completion time of an RAR-based training job depends not only on its resource allocation decisions (i.e., the number of ring-forming workers and their locality), but also on the number of {\em concurrent} RAR-based DDL jobs that (partially or completely) share the communication links of this job.
The complex communication coupling between concurrent RAR-based training jobs renders it {\em intractable} to compute the per-iteration execution time of an RAR-based DDL job in closed-form.
\emph{Second}, there exists a fundamental trade-off in terms of job locality.
On one hand, co-locating all workers of an RAR-based DDL job on the same server enjoys a faster intra-server communication speed, but could lead to resource fragmentation. 
On the other hand, spreading the ring of an RAR job over multiple servers could also result in more contentions of communication links and overhead in establishing connections between servers.
\emph{Last but not least}, due to the resource constraints of each server and the iterative nature of DDL training workload, the resource allocation decision for each RAR-based training job is subject to a mix of packing and covering types of constraints, both of which are known to be NP-hard.

\smallskip
{\bf Our Contributions:}
In this paper, we overcome the above challenges and design a suite of scheduling algorithmic techniques for efficient RAR-based DDL training in multi-tenant GPU clusters with theoretical makespan performance guarantees.
The key idea of our algorithmic design is to transfer the structural complexity of the intractable  per-iteration running evaluation in the original scheduling problem to the dimensional complexity of an equivalent reformulated problem, which has a much cleaner integer linear program structure to work with.
Our main results and technical contributions are summarized as follows:

\begin{list}{\labelitemi}{\leftmargin=1.5em \itemindent=-0.0em \itemsep=.2em}
\item We first propose a new analytical framework for RAR-based DDL training resource allocation and scheduling that characterizes both communication contention and overhead under the RAR architecture in a multi-tenant GPU cluster.
This analytical modeling serves as the foundation to enable us to formulate the scheduling optimization framework to minimize the makespan of all RAR-based training jobs.

\item As mentioned earlier, due to the complex resource contentions and couplings between RAR-based DDL jobs, it is intractable to determine the closed-form expression for the per-iteration execution time for each DDL job.
To address this challenge, we further reformulate the original problem into an equivalent integer problem, which has a cleaner problem structure.
Doing so allows us to convert the structural complexity of the original problem to the exponential dimensionality complexity in the reformulated problem, which is more amenable for low-complexity search-based optimization algorithm design.

\item Based on the above problem reformulation, we propose an efficient scheduling algorithm called SJF-BCO (\ul{s}mallest \ul{j}ob \ul{f}irst with \ul{b}alanced \ul{c}ontention and \ul{o}verhead) with theoretical approximation ratio guarantee.
We conduct extensive experiments to verify the performance of our proposed SJF-BCO algorithm and compare with existing scheduling policies to show the superiority of SJF-BCO over these baselines.
\end{list}
Collectively, our results contribute to a comprehensive and fundamental understanding of RAR-based DDL resource scheduling optimization.
The roadmap of the rest of the paper is as follows.
In Section~\ref{sec:Related}, we review the related literature.
Section~\ref{sec:Preliminaries} present preliminaries to familiar readers with the necessary background.
Section~\ref{sec:model_formulation} introduces the system model and problem formulation.
Section~\ref{sec:alg} demonstrates our algorithms and Section~\ref{sec:Performance} provides their performance analysis.
Section~\ref{sec:numerical} presents numerical results and Section~\ref{sec:conclusion} concludes this paper.


\section{Related Work} \label{sec:Related}

As mentioned in Section~\ref{sec:intro}, DDL training job scheduling algorithms have received growing interest recently.
Research in this area aims to schedule DDL jobs and manage computing resources efficiently in multi-tenant GPU computing clusters.
Early attempts in this field were mostly heuristic approaches based on empirical observations and models to conduct the resource scheduling (e.g.,~\cite{Gu19:Tiresias,Mahajan20:Themis,Mei17:energy,Chau17:energy}). 
For example, Gandiva~\cite{Xiao18:Gandiva} considered GPU time-slicing and job scheduling by predicting DDL training jobs characteristics.
Optimus~\cite{Peng18:Optimus} leveraged performance models through online-fitting to guide the job scheduling aiming to minimize training completion time.
Rather than using prediction models, another line of research is to take advantage of the model-less data-riven learning methods for DDL job scheduling (e.g.,~\cite{Bao19:Harmony,Hu19:Spear,Wang20:Metis}).
For instance, Harmony~\cite{Bao19:Harmony}, a deep-reinforcement-learning-based  scheduler considered minimizing the job completion time.
Hu.~{\it et al.}~\cite{Hu19:Spear} designed a new scheduling framework called Spear to minimize the makespan of jobs by leveraging the deep reinforcement learning techniques.
However, these works do not provide theoretical performance guarantee.
Also, none of these works considered RAR-based DDL job scheduling.

The most related work to this paper is GADGET~\cite{Yu22:GAGET}, which characterized RAR-based DDL job scheduling based on the assumption that the bandwidth of each job is reserved.
As a result, there is no need to consider communication contention in~\cite{Yu22:GAGET}. We note that a limitation of the reserved bandwidth assumption is that it could lead to resource under-utilization.
In contrast, this paper considers communication contention to avoid this limitation.
This, however, renders the scheduling problem far more challenging.
Lastly, Wang et al.~\cite{Wang20:contention} also considered contention under various all-reduce architectures, including RAR.
However, they also relied on a system-dependent online-fitting model to predict the execution time and did not explicitly formulate any scheduling optimization problem.
Their solution was based on heuristics without theoretical performance guarantee.
In contrast, we develop an analytical model to facilitate the job scheduling as a rigorous optimization problem, which in turn entails approximation algorithm design with theoretical performance guarantee.
\section{Ring-All-Reduce (RAR)-Based Distributed Learning: A Primer}\label{sec:Preliminaries}

In this section, we  provide a quick overview on the RAR-based distributed learning 
to familiarize readers with necessary background and fix the terminologies that are useful in the rest of the paper.

\begin{figure}[t!]
\includegraphics[width=.48\textwidth]{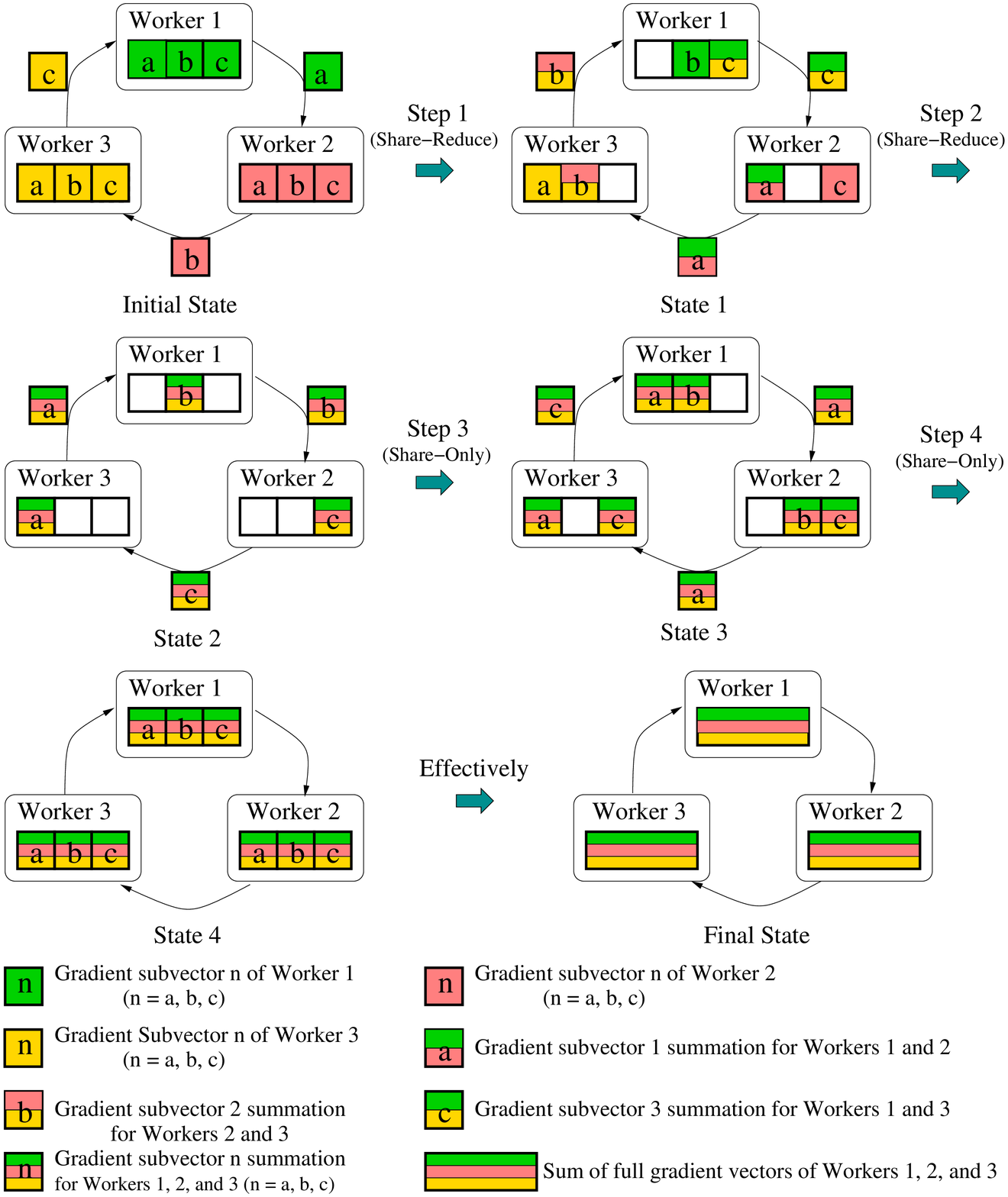}
\caption{A three-worker illustrative example of the ring-all-reduce (RAR) process.}
\label{fig:ringallreduce}
\vspace{-.1in}
\end{figure}


\smallskip
{\bf 1) SGD-Based Distributed Learning:}
The training of many ML problems is typically in the form of an empirical risk minimization (ERM) problem: $\min_{\w \in\mathbb{R}^d} \bar{L}(\w)\triangleq\frac{1}{P} \sum_{i=1}^{P} L(\w,\delta_i)$, where $\w$ contains the model parameters to be learned, $L(\w,\delta_i)$ is a loss function, 
and $P$ is the total number of samples.
Due to the high-dimensionality and the large dataset size of many ERM problems (e.g., in deep learning), the stochastic gradient descent (SGD) method has become the most widely adopted method.
The SGD method can be written as the following iterative process: $\w_{k+1} = \w_k  -  (\eta_k/Q)\sum_{i\in\mathcal{Q}_k} \g_k^i$, where $\eta_k$ denotes the learning rate in the $k$-th iteration, $\mathcal{Q}_k$ represents the mini-batch in the $k$-th iteration with $|\mathcal{Q}_k| =Q$, and $\g_k^i$ is a stochastic gradient based on a random sample $\delta_i \in \mathcal{Q}_k$.
The finite-sum and mini-batch structure of SGD naturally lends itself to a {\em distributed} implementation in a $Q$-worker DDL system coordinated by a parameter server as follows:
First, the dataset is partitioned by $Q$ workers.
In each iteration $k$, each worker retrieves the current model parameters from the server and randomly draws a sample from its local dataset, and then computes a stochastic gradient (e.g., using the backpropagation method).
Then, all workers send their gradients to the server to be aggregated.

\smallskip
{\bf 2) The Ring-All-Reduce (RAR) Architecture:}
It can be seen from the above discussions that SGD-based distributed learning naturally implies a server-worker (SW) architecture.
However, as mentioned in Section~\ref{sec:intro}, the SW architecture suffers from scalability limitations as the number of workers increases.
This is because all workers need to communicate with the server, which creates a bottleneck.
Specifically, a $w$-worker SW system that solves a $d$-dimensional ERM problem requires $2wd$ amount of data exchange per iteration (each worker sends and receives a $d$-dimensional vectors per iteration), which scales {\em linearly} with respect to $w$.

To address this scalability limitation, the RAR~\cite{Patarasuk09:Ring-AllReduce} has been proposed to {\em remove the server}.
Under RAR, the workers form a ring to exchange and aggregate data collaboratively. 
For a $w$-worker RAR system, each worker splits its stochastic gradient into $w$ sub-vectors (see Fig.~\ref{fig:ringallreduce} for an example with $w=3$).
Each iteration of RAR has $2(w-1)$ steps that can be divided into two phases.
In the first phase (steps $1,\ldots,w-1$), workers perform gradients reduction (i.e., summation), where each worker receives a gradient subvector from its upstream worker and sends its local reduction result to its downstream worker (Share-Reduce phase).
In the second phase (steps $w,\ldots,2w-2$), each worker circulates its resultant sub-vectors following the same token to obtain its final resultant gradients vector (Share-Only phase).
Since each worker sends $\frac{d}{w}$ amount of data for $2(w-1)$ times, the total amount of data any worker receives is $\frac{2d(w-1)}{w}$, which is asymptotically {\em independent} of $w$ as $w$ increases (also referred to as being {\em bandwidth-optimal} in the literature).


\section{System Model and Problem Formulation} \label{sec:model_formulation}

In this section, we first introduce our system model in Section~\ref{subsec:sys_model} and then present the problem formulation for RAR-based DDL scheduling optimization in multi-tenant GPU clusters in Section~\ref{subsec:formulation}.

\subsection{System Model} \label{subsec:sys_model}

{\bf 1) Scheduling Model:}
Consider a multi-tenant GPU cluster that contains a set of servers $\mathcal{S}$.
Each server is equipped with a set of homogeneous (i.e., of equal computation speed) and synchronized GPUs.
The servers in $\mathcal{S}$ are connected by a network and the network topology can be modeled as a connected graph.
In the beginning of a scheduling horizon $\mathcal{T}$ of length $|\mathcal{T}| = T$ time-slots, there is a set of RAR-based DDL jobs $\mathcal{J}$ waiting to be scheduled for training over $\mathcal{T}$. 
Each job $j \in \mathcal{J}$ is associated with a number of GPUs $G_j$ and a total number of training iterations $F_j$ from its users, both of which are requested by its users.\footnote{In practice, to prevent spending excessively long time waiting for the training process of a DDL job to converge, a maximum number of training iterations is usually given.}

In this paper, we consider the ``gang-scheduling'' discipline that is widely adopted in practical large-scale GPU clusters~\cite{Mahajan20:Themis,Gu19:Tiresias,Wang20:contention}.
Under gang scheduling, all workers (i.e., GPUs) of an RAR-based DDL job should be allocated {\em simultaneously.}
Moreover, once a job is scheduled to start, all GPUs allocated for this job will run to the job's completion and no preemption/migration is allowed.\footnote{Besides the overhead and complication added for both software and hardware, it has been shown that  frequent job preemption and migration may lead to significant performance degradation~\cite{Gu19:Tiresias}.}
Upon the job's completion, the occupied resource will also be released simultaneously.
Each GPU can only be occupied by one worker of some job at any given time.
As shown in Fig.~\ref{fig:example}, the workers of a job can be allocated within a single server or across multiple servers, as long as there exists a path in the underlying network that connects these workers and forms a ring topology to perform the RAR process. 
Note that Fig.~\ref{fig:example}(a) allocates the workers in the same server for each job, thus having no communication overhead.
On the contrary, Fig.~\ref{fig:example}(b) allocates workers across different servers for each job, which introduces communication contention when the two jobs happen to perform RAR communication concurrently.

\begin{figure}[t!]
    \centering
    \includegraphics[width=0.45\textwidth]{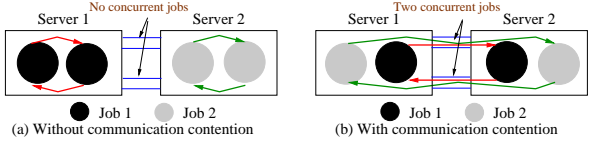}
    \caption{An example of worker placement.}
    \label{fig:example}
    \vspace{-.1in}
\end{figure}

In this system, the control decisions of the scheduler are: i) determine a feasible scheduling for all jobs in $\mathcal{J}$ subject to network resource capacity; and ii) determine each job's starting time.
Specifically, consider an RAR-based DDL job $j$ scheduled with $w_j$ workers and its gradient size is $m_j$.
Let $y_{js}[t] \in \mathbb{Z}^{+}$ denote the number of GPUs scheduled for job $j$ on server $s$ in time-slot $t \in \mathcal{T}$.
Then, a scheduling decision in time-slot $t$ can be fully defined by the vector $\y[t] \triangleq [y_{js}[t], \forall j, s]$.
Let $a_j=\arg\min_t\{y_{js}[t]>0, \forall s\}$ be the starting time of job $j$ (to be determined) by the scheduling and let $T_j$ be the resultant completion time of job $j$.
Let $\mathcal{J}[t]\triangleq\{j | t\in[a_j,T_j]\}$ represent the set of {\it active jobs} (jobs being executed) in time slot $t$.
Clearly, to satisfy the $G_j$ number of GPUs requested for job $j$ during its training time,
we have:
\begin{align}
    \label{eqn:gpu}&\sum_{s\in\mathcal{S}}y_{js}[t] = G_j, \quad \forall j\in\mathcal{J}[t], t\in\mathcal{T}.
\end{align}
Also, scheduling decisions $\y[t]$, $\forall t$ are subject to GPU resource constraints.
Let $O_s$ represent the GPU capacity of server $s$.
To ensure that the allocated GPUs do not exceed each server's limit, we have:
\begin{align}
    \label{eqn:cap}&\sum_{j\in\mathcal{J}[t]}y_{js}[t]\leq O_s, \quad\forall s\in\mathcal{S}, t\in\mathcal{T}.
\end{align}
Also, under the non-preemptive gang scheduling, we have:
\begin{align}
y_{js}[t] = y_{js}[t-1], \quad\forall s\in\mathcal{S}, j\in\mathcal{J}[t], a_j < t\leq T_j.
\end{align}
Finally, to ensure that no workers should be allocated for non-active jobs and positive integer number of workers should be assigned to active jobs, we have:
\begin{align}
&y_{js}[t] = 0, \quad \forall s\in\mathcal{S}, j\not\in\mathcal{J}[t], t\in\mathcal{T}, \\    
\label{eqn:int}&y_{js}[t]\in\mathbb{Z}^{++},  \quad \forall s\in\mathcal{S}, j\in\mathcal{J}[t], t\in\mathcal{T}.
\end{align}

\smallskip
{\bf 2) Communication Contention Modeling:} 
With the above scheduling model, we are now in a position to present our communication contention model.
We assume that no communication contention will be introduced if at most one server is used for the job.
For example, in Fig.~\ref{fig:example}(a), jobs 1 and 2 both use intra-server communication and does not incur any communication contention.
By contrast, in Fig.~\ref{fig:example}(b), jobs 1 and 2 induce communication contention since they both compete for inter-server link bandwidth between servers 1 and 2.
We let $p_j[t]$ denote the largest number of concurrently running jobs that share an inter-server communication link with job $j$ in time slot $t$, which can be computed as:
\begin{align}
\label{eqn:p_j}p_j[t] &=\max_{s\in\mathcal{S}}\bigg\{\mathds{1}\{0<y_{js}[t]<G_j\}\sum_{j'\in\mathcal{J}[t]}\mathds{1}\{0\!<\!y_{j's}[t]<G_{j'}\}\bigg\}, \nonumber\\
&\hspace{1.7in}\forall j\in\mathcal{J}[t], t\in\mathcal{T}.
\end{align}
In \eqref{eqn:p_j}, the first term $\mathds{1}\{0<y_{js}[t]<G_j\}$ indicates that only active jobs using inter-server communication on server $s$ will be considered.
The second term $\sum_{j'\in\mathcal{J}[t]}\mathds{1}\{ 0<y_{j's}[t]<G_{j'}\}$ represents the number of different jobs that compete for  inter-server communication on server $s$.
Since job $j$ may not be transmitting at all times (due to switching between communication and computation modes), we let $k_j[t]$ be the actual largest number of {\em contending} jobs on average with job $j$ in time-slot $t$, which can be assumed to be statistically linearly proportional to $p_j[t]$, i.e.,
\begin{align}
k_j[t] = \xi_1 p_j[t], \quad\forall j\in\mathcal{J}[t], t\in\mathcal{T},
\end{align}
where $\xi_1\in(0,1]$ is a positive constant.

\smallskip
{\bf 3) RAR-Based DDL Training Completion Time Modeling:}
To evaluate the job completion time $T_j$ of job $j$, we need to first characterize the RAR training speed.
Note that the per-iteration RAR operation time of each DDL job can be decomposed into three parts: i) information exchange time, ii) computation time, and iii) communication overhead.
Next, we will model the operation time of each part individually.

\smallskip
{\em 2-1) Information Exchange Time:}
We use $B^{\{\omega_{j,1},\omega_{j,2}\}}(\mathbf{y}[t])$ to denote the bandwidth between two successive workers $\omega_{j,1}$ and $\omega_{j,2}$ in job $j$'s ring in time-slot $t$ under a scheduling decision $\y[t]$, where $\omega_{j,2}$ is the downstream worker of $\omega_{j,1}$.
Note that, unlike~\cite{Yu22:GAGET}, we do not reserve bandwidth for each job in this paper, and this bandwidth is determined by communication contention with other jobs under the scheduling decisions $\y[t]$ (see Fig.~\ref{fig:example}(b)).
We let $B_j(\y[t])\triangleq\min_{(\omega_{j,1},\omega_{j,2})\in\mathcal{L}_j} B^{\{\omega_{j,1},\omega_{j,2} \}}(\mathbf{y}[t])$ represent the bandwidth of the {\em bottleneck} link of job $j$ under scheduling decision $\y[t]$, where $\mathcal{L}_j$ is the set of all links  of job $j$.
Recall from Section~\ref{sec:Preliminaries} that the amount of information exchanged in each time-slot can be computed as $\frac{2m_j}{w_j}(w_j-1)$.
Thus, the number of time-slots for information exchange can be computed as $\frac{2m_j}{w_j}(w_j-1)/B_j(\y[t])$.

Clearly, the bottleneck link of job $j$ occurs in those links that are shared by the largest number of other concurrently running jobs.
We let $b^e$ and $b^i$ be the link bandwidth between and within servers, respectively, where $b^i\gg b^e$ in practice~\cite{Zhang17:Poseidon,Shi18:DDL}.
Recall that $k_j[t]$ denotes the actual {\it largest} number of contending jobs on average with job $j$ in time-slot $t$.
Ideally, each job on this bottleneck link has an equal share of bandwidth $b^e/k_j[t]$ under communication contention.
In practice, however, the bandwidth performance often degrades when multiple jobs compete for a link, which results in each job having less than $b^e/k_j[t]$ share of bandwidth if $k_j[t]\geq 2$~\cite{Wang20:contention}.
To model this effect, we use a function $f(\boldsymbol{\alpha}, k_j[t])$ to represent the ``bandwidth sharing degradation factor'' under communication contention, where $\boldsymbol{\alpha}\in\mathbb{R}^d$ captures all parameters that could lead to degradation.
We assume that $f(\boldsymbol{\alpha}, k_j[t])$ satisfies the following properties: i) $f(\boldsymbol{\alpha}, 1) = 1$ and ii) $f(\boldsymbol{\alpha}, k_j[t])$ is an increasing function of $k_j[t]$.
For example, if $f(\boldsymbol{\alpha}, k_j[t])$ is a linear function $k_j[t] + \alpha(k_j[t]-1)$, then $B_j(\y[t])= b^e/f(\boldsymbol{\alpha}, k_j[t])  = b^e/( k_j[t] + \alpha(k_j[t] -1))$.

Recall that in the special case where all workers of a job $j$ are co-located within a single server, there is no contention.
Further, intra-server communication is typically enabled by fast interconnect techniques (e.g., NVLink~\cite{Foley17:NVLink}). 
Hence, we have $B_j(\y[t])=b^i$. 

\smallskip
{\em 2-2) Computation Time:}
To characterize the computation time in the RAR operation, we use $C$ to denote the computational speed of a GPU unit (defined as the amount of data processed in each time-slot).
Since there are $\frac{m_j}{w_j}(w_j-1)$ amount of data for reduction in each RAR operation, the number of time-slots to complete all reductions can be computed as $\frac{m_j}{w_j}(w_j-1)/C$.
In addition to the all-reduce operation time, the computation time also includes the forward pass (FP) time and the backward pass (BP) time to compute a stochastic gradient.
We let $\Delta^f_j$ ($\Delta^b_j$) denote the duration of one FP (BP) of job $j$.
Note that the FP time is proportional to the mini-batch size $M_j$, which can be calculated as $\Delta^f_jM_j$ (the size of a mini-batch multiplied by the FP processing time of one sample).
Meanwhile, the BP time $\Delta^b_j$ is usually not relevant to the mini-batch size $M_j$ and is typically fixed.
Thus, the total number of time-slots for per-iteration computation can be computed as $\frac{m_j}{w_j}(w_j-1)/C + \Delta^f_jM_j + \Delta^b_j$.

\smallskip
{\em 2-3) Communication Overhead:}
In practice, it has been observed that typically, the more servers an RAR-based DDL job uses to perform the training, the larger the latency due to communication overhead (e.g., ACK time for message transmission, negotiation time among all workers before conducting all-reduce~\cite{Sergeev18:Horovod}) can be~\cite{Wang20:contention}.
In this paper, we use  $\gamma_j(\mathbf{y}_j[t])$ to denote the latency of job $j$ caused by communication overhead in time-slot $t$.
We assume that the latency is linear proportional to the number of servers in use, i.e., $\gamma_j(\y_j[t])=\xi_2\sum_s\mathds{1}\{y_{js}[t] > 0\}$, where $\y_j[t]= [y_{js}[t]>0, \forall s]$ and $\xi_2\in(0, 1]$ is a positive constant.

Lastly, putting 2-1) -- 2-3) together, we can compute the RAR operation time of job $j$ under  scheduling decision $\y[t]$ as follows:
\begin{align} \label{contention_model}
\!\!\! \tau_j[t] \!=\!\frac{\frac{m_j}{w_j}\cdot2(w_j\!-\!1)}{B_j(\y[t])}\!+\!\frac{\frac{m_j}{w_j}\cdot(w_j\!-\!1)}{C}\!+\!\gamma_j(\y_j[t])\!+\!\Delta^f_jM_j + \Delta^b_j.
\end{align}
Hence, the RAR training speed $\phi_j[t]$ (i.e., the number of mini-batch iterations completed by job $j$) in time-slot $t$ can be computed as $\phi_j[t]\triangleq\lfloor (\tau_j[t])^{-1} \rfloor$.
Recall that $F_j$ is the requested number of iterations for training job $j$.
Thus, job $j$'s completion time can be calculated as:
\begin{align} \label{eqn:completion}
T_j=a_j+\arg\min_t \big\{\sum\nolimits_{t\in\mathcal{T}}\phi_j[t]\geq F_j \big\}, \quad\forall j\in\mathcal{J}[t].
\end{align}

\subsection{Problem Statement} \label{subsec:formulation}
In this paper, our goal is to determine the scheduling decisions $\y[t]$ to minimize the makespan (i.e., $\max_j T_j$), which is one of the most useful metrics to measure the efficiency of multi-tenant GPU clusters~\cite{Grandl16:OSDI,Grandl16-2:OSDI}.
Putting all modeling constraints and the objective together, the RAR-based DDL job scheduling problem (RAR-DDLS) can be formulated as the following optimization problem:
\begin{align*}
\text{\bf RAR-DDLS:} \underset{}{\ \min_{y_{js}[t], \forall j, s, t} } \max_{j\in\mathcal{J}} &\ T_j\\
    \text{subject to } 
    &\text{Constraints}~(\ref{eqn:gpu})-(\ref{eqn:completion}).  
\end{align*}

We note that Problem RAR-DDLS is an integer non-convex program with packing and covering constraints, which is NP-Hard.
In addition, the non-convex constraint in~(\ref{eqn:p_j}) involves indicator functions and the $\max$ operator, which cannot be written in a closed-from expression and hence is not amenable to conventional optimization techniques.
Due to these challenges, we will pursue an approximation algorithmic approach in Section~\ref{sec:alg} that offers provable approximation ratio guarantee.
To conclude this section, we summarize the key notations in this paper in Table~\ref{table:notation}.

\begin{table}[t!]
\centering
\caption{Notation.}
\label{table:notation}
\vspace{-.1in}
{\small
\begin{tabular}{| c | l|}
\hline
$\mathcal{T}/G_j$ &Scheduling time horizon/\# of GPUs requested by job j\\ \hline
$\mathcal{S}$/$\mathcal{N}$ &Set of servers/GPUs in the cluster\\ \hline
$\mathcal{J}[t]$ & The set of active jobs in time-slot t \\ \hline
$k_j[t]$ &  \begin{tabular}[c]{@{}l@{}l@{}}Actual largest number of contending jobs on average with\\ job j in time-slot $t$\end{tabular}\\ \hline
$\tau_j[t]$ &Per-iteration training time of job j in time-slot t\\ \hline
$y_{js}[t]$ &\# of GPUs scheduled on server $s$ for job j in time-slot $t$\\ \hline
$O_s$ &GPU capacity of server $s$\\ \hline
$a_j/T_j$ & Starting/completion time slot of job $j$ \\ \hline
$\mathcal{Y}$ &  \begin{tabular}[c]{@{}l@{}l@{}}The set of feasible scheduling schemes over $\mathcal{T}$\end{tabular}\\ \hline
$\y_j^k$ & \begin{tabular}[c]{@{}l@{}l@{}} A schedule of job $j$ indexed with $k$\end{tabular}\\ \hline
$\rho(\y_j^k)$ & Actual execution time of job $j$ when schedule $\y_j^k$ is used\\ \hline
$\hat{\rho}(\y_j^k)$ & \begin{tabular}[c]{@{}l@{}l@{}}Estimated execution time of job $j$ when schedule $\y_j^k$ is used\end{tabular}\\ \hline
$\mathcal{G}(\y_j^k)$ & \begin{tabular}[c]{@{}l@{}l@{}}Set of GPUs allocated for job $j$ when schedule $\y_j^k$ is used\end{tabular}\\ \hline
$x_j^k$ & \begin{tabular}[c]{@{}l@{}l@{}}Indicate whether job $j$ follows schedule $\y_j^k$ or not\end{tabular}\\ \hline
$W_{jg}^k$ & \begin{tabular}[c]{@{}l@{}l@{}}Execution time added to GPU $g$ by job $j$ if job j follows $\y_j^k$\end{tabular}\\ \hline
$U_s^g$ & \begin{tabular}[c]{@{}l@{}l@{}} The accumulative execution time of worker $g$ on server $s$\end{tabular}\\ \hline
%
\end{tabular}
}
\vspace{-.1in}
\end{table}


\section{Solution Approach} \label{sec:alg}
As mentioned in Section~\ref{sec:model_formulation}, a key challenge to solve Problem~RAR-DDLS is that, due to the mixed covering- and packing-type constraints, the number of job scheduling combinations grows exponentially as the number of servers/jobs increases.
Thus, it is computationally prohibitive to enumerate all possible combinations before the scheduler decides when to start and which GPU(s) should be allocated to achieve the optimal scheduling.
Exacerbating the problem is the fact that communication contention renders a mixed-integer bilinear structure in (\ref{eqn:p_j}), making it intractable to express $p_{j}[t]$ in closed-form.
Due to these challenges, it is difficult to directly solve Problem~RAR-DDLS based on its original formulation.
To overcome this challenge, we propose the following ``indirect'' approach to solve Problem~RAR-DDLS.

\smallskip
{\bf 1) Basic Idea:}
First, we note that, although not in closed-form expressions, the per-iteration time $\tau_j[t]$ for each job can be computed in polynomial time according to (\ref{eqn:p_j})-(\ref{contention_model}) once a schedule (i.e., $\y[t] = \{y_{js}[t], \forall j, s\}$) is given.
Specifically, we note that the per-iteration time $\tau_j[t]$ is determined by $B_j(\y[t])$ and $\gamma_j(\y_j[t])$.
Moreover, $f(\boldsymbol{\alpha}, k_j[t])$ increases as $k_j[t]$ gets larger, and $\gamma_j(\y_j[t])$ increases as $\sum_s\mathds{1}\{y_{js}[t]>0\}$ grows.
Thus, the range of $\tau_j[t]$ can be estimated.
The largest number of $k_j[t]$ is $\max_s O_s$, i.e., the worst case would be each job places one of its workers into the server with the biggest capacity and they all compete for the bandwidth.
Thus, we can have $B_j(\y[t]) \in [b^e/f(\boldsymbol{\alpha}, \max_s O_s), b^i]$.
 In addition, we have $\sum_s\mathds{1}\{y_{js}[t]>0\}\in[1, G_j]$.
 Then by plugging $B_j(\y[t])$ and $\sum_s\mathds{1}\{y_{js}[t]>0\}$ with their lower and upper bounds in Eqn.~(\ref{contention_model}), respectively, we can attain the lower and upper bounds.

The above insight suggests that we can solve Problem~RAR-DDLS via the following {\em search-based approach} to circumvent the structural difficulty in (\ref{eqn:p_j})-(\ref{contention_model}) .
We can first search for a schedule $\y$, then $\tau_j[t],\forall t$ can be efficiently evaluated to estimate the makespan.
Then, we repeat the process until we find a ``good enough'' schedule.
Therefore, we can have the algorithmic framework as shown in Fig.~\ref{fig:idea} to obtain an approximate makespan if the search space is given.
Clearly, the search space of $\y$ remains huge and difficult to sample.
Nonetheless, in what follows, we show that Problem~RAR-DDLS can be reformulated to facilitate this search-based approach.

\begin{figure}[t!]
    \centering
    \includegraphics[width=0.45\textwidth]{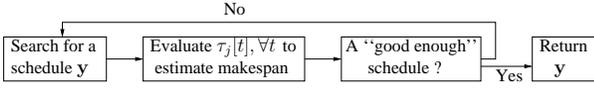}
    \caption{Basic idea for solving Problem~RAR-DDLS.}\label{fig:idea}
    \vspace{-.1in}
\end{figure}

\smallskip
{\bf 2) Problem Reformulation:}
In order to enable the search of a schedule, we first reformulate Problem~RAR-DDLS by introducing following notations.
We let $\mathcal{N} = \{1, \hdots, N\}$ be the set of all GPUs in the cluster.
Let $\mathcal{Y}=\{\y^1,\hdots,\y^{|\mathcal{Y}|}\}$ be the set of feasible scheduling schemes for the jobs to be scheduled, where $\y^k=\{\y_1^k,\hdots,\y_J^k\}$ and $\y_j^k=\{y^k_{js}[t], \forall s\in\mathcal{S}, t\in\mathcal{T}\}\in\mathbb{Z}_+^{S\times T}$.
Note that, with a slight abuse of notation, we use $y^k_{js}[t]$ here as a constant (not a variable) to denote the number of workers allocated for job $j$ on server $s$ in time-slot $t$ if  schedule $\y^k$ is used.
We also use $\rho_j(\y^k)$ to denote the execution time of job $j$ if schedule $\y^k$ is used.
Also, we denote the starting time of job $j$ under schedule $\y^k$ as $a_j(\y^k)\triangleq \arg\min\{t | y^k_{js}[t] > 0, \exists s\}$.
Let $x_j^k \in \{0,1\}$ be the binary variable to indicate whether job $j$ follows schedule $\y^k$ ($x_j^k=1$) or not ($x_j^k=0$).
Then Problem~RAR-DDLS can be reformulated as the following integer linear program (ILP):
\begin{align}
\label{problem:makespan}\min_{x_j^k,\forall j,k} \max_j& \ x_j^k\big(a_j(\y^k)+\rho_j(\y^k)\big)\\
\text{subject to. } &\sum_{k\in\{1,\hdots,|\mathcal{Y}|\}}x_j^k = 1,\quad \forall j\in\mathcal{J},\label{eqn:all_placed}\\
&x_j^k = x_{j'}^k, \quad \forall j, j'\in\mathcal{J}, k\in\{1,\hdots,|\mathcal{Y}|\},\label{eqn:placement}\\
&x_j^k\in\{0,1\}, \quad\forall j\in\mathcal{J}, k\in\{1,\hdots,|\mathcal{Y}|\}.\label{eqn:binary}
\end{align}
Constraint~(\ref{eqn:all_placed}) ensures that exactly one schedule is chosen.
Constraint~(\ref{eqn:placement}) ensures that all jobs use the same schedule $\y^k$.
We note that, although Problem~(\ref{problem:makespan}) has a simpler structure compared to Problem~RAR-DDLS, it hides the complexity in the dimensionality of the exponential search space $\mathcal{Y}$, which is intractable to explore.
However, based on this reformulated problem, we will show next that it is possible for one to identify a ``good enough'' schedule such that the makespan can be upper bounded.

Unfortunately, Problem~(\ref{problem:makespan}) remains an NP-hard problem.
We state this formally in Theorem~\ref{thm:np}, which can be proved based on the reduction to the vertex coloring problem (VCP). 

\begin{thm}\label{thm:np}
Let $n_g=\max_j G_j$.
Solving Problem~\eqref{problem:makespan} to within an $O(\frac{\log n_g}{2^{\sqrt{\log\log n_g}}})$-approximate ratio is NP-hard even when the exact processing time of each job is available.
\end{thm}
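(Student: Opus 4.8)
The plan is to exhibit a polynomial-time, gap-preserving reduction \emph{from} the vertex coloring problem (VCP) to Problem~\eqref{problem:makespan}, so that the known inapproximability of VCP transfers directly. I would rely on the known fact that, for an $n$-vertex graph $G=(V,E)$, it is NP-hard to approximate the chromatic number $\chi(G)$ within a factor of $O\!\left(\frac{\log n}{2^{\sqrt{\log\log n}}}\right)$; the entire argument then reduces to showing that the minimum makespan of the constructed scheduling instance equals $\chi(G)$ exactly, so that any $\alpha$-approximation for the makespan yields an $\alpha$-approximation for $\chi(G)$.

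Given $G=(V,E)$ with $|V|=n$, I would build an instance of Problem~\eqref{problem:makespan} as follows. Introduce one unit-length job $j_v$ per vertex $v\in V$ and one GPU per edge $e\in E$, arranged on servers so that the allocation is feasible under constraints~\eqref{eqn:gpu}--\eqref{eqn:int}. Job $j_v$ is required to place a worker on exactly the GPUs corresponding to the edges incident to $v$, so $G_{j_v}=\deg(v)$ and $n_g=\max_v \deg(v)$. Because each GPU can host at most one worker at a time (the capacity constraint~\eqref{eqn:cap} with unit capacity), two jobs $j_u$ and $j_v$ can run \emph{concurrently} if and only if $u$ and $v$ share no incident edge, i.e., iff $\{u,v\}\notin E$. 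I would fix all processing parameters (choosing $b^i,b^e,\xi_1,\xi_2,\Delta^f_j,\Delta^b_j$ so that the contention and overhead terms in~\eqref{contention_model} are neutralized) so that every $\rho_j(\y^k)$ equals a single known time-slot; this is exactly what supports the ``even when the exact processing time is available'' clause, since the hardness then comes from the combinatorial placement structure rather than from the difficulty of evaluating $\tau_j[t]$.

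The core correctness step is the identity between minimum makespan and chromatic number. In any feasible schedule, the set of jobs running in a given time-slot must be pairwise non-conflicting and hence corresponds to an independent set of $G$; conversely, any proper $c$-coloring of $G$ yields a feasible schedule of makespan $c$ by running each color class in its own time-slot. Since all jobs are unit-length, the minimum makespan equals the minimum number of independent sets needed to cover $V$, which is precisely $\chi(G)$. Consequently the reduction is gap-preserving, and any algorithm approximating Problem~\eqref{problem:makespan} within $O\!\left(\frac{\log n_g}{2^{\sqrt{\log\log n_g}}}\right)$ would approximate $\chi(G)$ within the same factor (using $n_g=\Theta(n)$), contradicting the assumed NP-hardness of VCP.

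I expect the main obstacle to be the degree/vertex-count bookkeeping needed to make the ratio expressed in $n_g=\max_j G_j$ match the VCP hardness expressed in the number of vertices $n$: I must enforce $n_g=\Theta(n)$ through a padding gadget that inflates vertex degrees while leaving $\chi(G)$ (hence the optimal makespan) unchanged. A secondary subtlety is verifying that the edge-GPU gadget is genuinely realizable as a feasible point of $\mathcal{Y}$ under gang scheduling and the capacity constraints, and that the neutralized contention/overhead terms do not perturb the unit processing times; both are mechanical but must be checked so that the makespan-equals-$\chi(G)$ identity remains exact.
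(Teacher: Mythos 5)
Your reduction is correct and is essentially the paper's own proof: unit-length jobs as vertices, one shared GPU per edge forcing adjacent jobs into distinct time slots, the exact identity between minimum makespan and $\chi(G)$, and hardness imported from vertex coloring. The only cosmetic difference is that the paper invokes the degree-parameterized coloring hardness directly (for graphs of maximum degree $n_g$, coloring a $2^{\sqrt{\log\log n_g}}$-colorable graph with $O(\log n_g)$ colors is NP-hard), which makes your padding gadget to enforce $n_g=\Theta(n)$ unnecessary.
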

\begin{proof}
Here, we consider the special case with all jobs having a unit processing time ($\rho(\y^k)=1$).
We first show that VCP can be reduced to the job scheduling problem in Problem~(\ref{problem:lp}) in polynomial time.
Given an instance $I$ of VCP, i.e., given a graph $G=(V, E)$ with bounded degree $n_g$, we can construct our job scheduling problem as follows: i) For each node $v\in V$, we create a job $j_v\in\mathcal{J}$, where it has only one schedule $y_{j_v} = \emptyset$;
ii) For each edge $(u, v)\in E$,  we add a worker $w_{u,v}$ to $\mathcal{S}$. Also, update the scheduling as $y_{j_u}=y_{j_u} \cup \{w_{u,v}\}$ and $y_{j_v}=y_{j_v} \cup \{w_{u,v}\}$.
If the graph $G's$ maximum degree is no greater than $n_g$, then the maximum number of workers that can be allocated to each job is also $n_g$.

With this reduction, we next show the solution of VCP can be translated to the solution of Problem~(\ref{problem:lp}), and vice verse.
First, recall that each job has one unit processing time.
Thus, all jobs should be executed inside a unit time interval ($[0, 1), [1, 2), \hdots$).
If we have the solution to VCP, then we can schedule jobs with the same color in the same interval.
Also, if we have the solution to the job scheduling problem, then the jobs in the same interval can be marked as the same color.
Hence, finding an optimal solution of Problem~(\ref{problem:lp}) is equivalent to find an optimal solution of VCP.

Similarly, given an instance of job scheduling, we can construct an instance of VCP, where the makespan equals to the number of colors.
Therefore, if we have an $\alpha$-approximate solution to the job scheduling problem, we have an $\alpha$-approximate solution to VCP.
However, 
with the graph of degree at most $n_g$, it is known that coloring a $2^{\sqrt{\log\log n_g}}$-colorable graph with $O(\log n_g)$ colors is NP-Hard.
This completes the proof.
\end{proof}
The hardness result in Theorem~\ref{thm:np} suggests that solving Problem~(\ref{problem:makespan}) necessitates the design of approximation algorithms, which is our goal in algorithm development next.

\smallskip
{\bf 3) Identify a Scheduling with Bounded Makespan:}
We let $\mathcal{G}_j(\y^k)$ be the set of GPUs allocated for job $j$ when schedule $\y^k$ is used.
We use $W_{jg}^k=x_j^k\rho_j(\y^k)$ to denote the execution time added to GPU $g$ by job $j$ if job $j$ follows schedule $\y^k$. 
Since each job $j$ only chooses one schedule, the total execution time of GPU $g$ can be computed as $W_g=\sum_j\sum_kW_{jg}^k$.
However, due to communication contention, the exact processing time $\rho_j(\y^k)$ is hard to evaluate in computing $W_{jg}^{k}$. 
Fortunately, the estimated processing time $\hat{\rho}_j(\y^k)$ can be bounded as $\hat{\rho}_j(\y^k) \in [l\rho_j(\y^k), u\rho_j(\y^k)]$ for some $l\leq 1$ and $u\geq 1$, since $\tau_j[t]$ is bounded. 
Here, we use $\frac{\hat{\rho}_j(\y^k)}{u}\leq \rho_j(\y^k)$ to replace $\rho_j(\y^k)$ when computing $W_{jg}^{k}$.
Consider a search algorithm $\pi$ that solves the following ILP to choose one schedule from $\mathcal{Y}$:
\begin{align}
\label{problem:lp}&\min \  -1\\
\text{subject to. }
&\hat{W}_{jg}^k = x_j^k\frac{\hat{\rho}_j(\y^k)}{u}, \forall j\!\in\!\mathcal{J}, k\!\in\!\{1,\hdots,|\mathcal{Y}|\}, g\!\in\!\mathcal{N},\label{eqn:wk}\\
&\sum_{j\in\mathcal{J}}\sum_{k\in\{1,\hdots,|\mathcal{Y}|\}}\hat{W}_{jg}^k\leq \theta_u, \quad \forall g\in\mathcal{N},\label{eqn:gpu_lp}\\
\nonumber&\text{Constraints}~(\ref{eqn:all_placed})-(\ref{eqn:binary}).
\end{align}
Note that Problem~(\ref{problem:lp}) has no objective function to be optimized since we are only interested in whether a feasible solution no greater than a given maximum execution time limit $\theta_u$ exists ($\theta_u$ depends on parameter $u$).
Constraints~(\ref{eqn:wk})--(\ref{eqn:gpu_lp}) ensure that no GPU's execution time would exceed $\theta_u$.
Let $W^{\pi}_{\max}=\max_{g\in\mathcal{N}}W_g$ be the maximum execution time of all GPUs returned by algorithm $\pi$.
Due to the use of estimated $\frac{\hat{\rho}_j(\y^k)}{u}$, the solution of $\pi$ finds a lower bound of $W_{\max}^{\pi}$, which is also a lower bound of the makespan under $\pi$ (due to potential idling resulted from synchronization barrier). 

Note that for any feasible scheduling with the upper bound $\theta_u$ for Problem~(\ref{problem:lp}), we can find a corresponding feasible solution for Problem~(\ref{problem:makespan}) by setting $x_j^k=1$ if job $j$ follows schedule $\y^k$; otherwise, set $x_j^k = 0$.
Thus, the challenge of solving Problem~(\ref{problem:makespan}) becomes finding a tightest execution time limit $\theta_u$ for Problem~(\ref{problem:lp}), which is relatively easy since there is no need to explore the exponential search space of schedules $\mathcal{Y}$.

It is insightful to understand the choice of $\theta_u$ in Problem~\eqref{problem:lp}.
On one hand, if $\theta_u$ is too small, Problem~(\ref{problem:lp}) could be infeasible, and no scheduling for Problem~(\ref{problem:makespan}) can be found.
On the other hand, when $\theta_u$ is too large, then all schedulings can be considered, and the gap between the optimal maximum execution time and the optimal makespan can be large, thus no meaningful lower bound of $W^\pi_{\max}$ can be found.
Fortunately, since determining an appropriate $\theta_{u}$ is a {\em univariate} search, we can simply use the bisection method to efficiently find the minimum $\theta_u$ feasible to Problem~(\ref{problem:lp}).

\smallskip
{\bf 4) Algorithm Description:}
We next present our scheduling algorithm based on bisection to search $\theta_u$ and the smallest job first strategy to solve Problem~\eqref{problem:lp} for a given $\theta_u$. 
Note that if a job's ring of workers is scheduled over a large number of servers, it may potentially worsen communication contention with concurrent jobs and its communication overhead could be large.
Therefore, to control the number of active servers, we use a threshold parameter $\kappa \in[1, n_g]$  to control the number of maximum servers for scheduling jobs.
We summarize our scheduling approach in Algorithm~\ref{alg:place}.
The intuition behind Algorithm~\ref{alg:place} is that: 
1) When the job is small (i.e., $G_j \leq \kappa$), we prefer to pack the job into servers whose GPUs are already occupied by some other jobs rather than opening new server(s) to host its workers. 
Since the job is small, the induced contention is mild by using the shared servers. 
Further, by packing its workers to these servers, we can avoid fragmentation introduced by a small job and save space for larger jobs that will be scheduled next.
2) If $G_j > \kappa$, we prefer to allocate the job's workers to new server(s).
This is because shared servers may only have limited available GPU(s), and in order to gang-schedule a large job, a large number of shared servers may be used, which leads to a high communication overhead.

 \begin{algorithm}[t!]
\SetAlgoLined
\textbf{Input:} $\mathcal{J}$, $U_s^g$, $\hat{\rho}_j(\y^k)$, $u$, $\lambda_j$\;
\textbf{Initialization:} Let $U_s^g\leftarrow0,\forall g, s$\;

Sort jobs by $G_j$ in non-decreasing order\label{line:sort-J}, and denote as $\mathcal{J}^{s}$\;
$m \leftarrow T$, $\y \leftarrow \emptyset$, $left\leftarrow 1$, $right\leftarrow T$\;
\While {$left <= right$\label{line:bs}}{
	$\theta_u\leftarrow (left+right)/2$, $m_{\theta}\leftarrow T$, $\y_\theta\leftarrow \emptyset$\;
	\For {$\kappa = 1,2,\hdots,\max_jG_j$\label{line:kappa}}{
		$\y_{\theta}^k\leftarrow\emptyset$, m$_{\theta}^k\leftarrow -1$\;
		\For {$j = 1,2,\hdots,|\mathcal{J}^s|$\label{line:iterate}}{
			\If{$G_j\leq \kappa$\label{line:<=}}{
			 Return $\y_j$, $T_j$ using Algorithm~\ref{alg:ff}\label{line:ff}\;
		}
			\Else{
				Return $\y_j$, $T_j$ using Algorithm~\ref{alg:ls}\label{line:ls}\;
		}
			\If{$\y_j==\emptyset$\label{line:infeasible}}{
				break\;
			}
			$\y_{\theta}^k\leftarrow \y_{\theta}^k\cup \{\y_j\}$, $m_{\theta}^k\leftarrow \max\{\text{m}_{\theta}^k, T_j\}\label{line:k-update}$\;
		}
		
		\If{$m_{\theta}^k < m_\theta$\label{line:local-compare}}{
			$m_\theta \leftarrow m_{\theta}^k$, $\y_\theta\leftarrow\y_\theta^k$\label{line:theta-update}\;
		}
	}
	\If{$m_\theta < m$\label{line:global-compare}}{
		$m\leftarrow m_\theta$, $\y\leftarrow\y_\theta$\label{line:global-update}\;
		$right\leftarrow \theta_u - 1$\label{line:right-move}\;
	}
	\Else{
		$left\leftarrow \theta_u + 1$\label{line:left-move}\;
	}
}
\Return m, $\y$\;

\caption{\underline{S}mallest \underline{J}ob \underline{F}irst with \underline{B}alanced \underline{C}ontention and \underline{O}verhead (SJF-BCO).}
 \label{alg:place}
\end{algorithm} 

\begin{algorithm}[t!]
\SetAlgoLined
\textbf{Input:} A given job $j$, $\mathcal{S}$, $U_s^g$, $\hat{\rho}_j(\y^k)$, $u$, $\theta_u$\;
	$\mathcal{G}^{\theta_u}_{idle}\leftarrow$ available GPUs with execution time not exceed $\theta_u$\label{line:add}\;
	     \If{$|\mathcal{G}^{\theta_u}_{idle}|\geq G_j\label{line:enough}$}{
	         Pick top-$G_j$ workers with least $U_s^g$ from $\mathcal{G}^{\theta_u}_{idle}$ as $\y_j$\label{line:LS}\;
		$T_j\leftarrow\arg\max_t\{y_{js}[t] > 0 | y_{js}[t]\in\y_j, \forall s, t\}$\label{line:t}\;
	         $U_s^g \leftarrow U_s^g + \hat{\rho}_j(\y^k)/u, \forall (g,s)\in\y_j\label{line:update-U}$\;
	         \Return $\y_j, T_j$\;
	     }
	   \If{there exists running jobs\label{line:if-ff}}{
	   	Waiting for some job to exit and then goes to Line~\ref{line:add}\label{line:repeat-ff}\;
	   }
	   \Return $\emptyset, T$\;

\caption{Fragment-Aware First Fit Packing (FA-FFP).}
 \label{alg:ff}
\end{algorithm} 

 \begin{algorithm}[t!]
\SetAlgoLined
\textbf{Input:} A given job $j$, $U_s^g$, $\hat{\rho}_j(\y^k)$, $u$, $\lambda_j$\;
	    Sort the server set $\mathcal{S}$ by $\sum_gU_s^g/O_s$ in non-decreasing order, and choose the top $m$-servers s.t. $\sum_{s=1}^{m}O_s \geq \lambda_jG_j$, and denote the selected server set as $\mathcal{S}_{selected}$\label{line:sort-S}\;
	    $\mathcal{G}^{\theta_u}_{idle}\leftarrow \emptyset$\;
	    \For{$s\in\mathcal{S}_{selected}\label{line:for}$}{
	        Sort GPUs whose $U_s^g+\hat{\rho}_j(\y^k)/u\leq\theta_u$ by execution time in non-decreasing order, then append them to $\mathcal{G}^{\theta_u}_{idle}$\label{line:sort-U}\;
	     }
	        	\If{$|\mathcal{G}^{\theta_u}_{idle}| \geq G_j\label{line:gpus}$}{
	        Pick top-$G_j$ workers with least $U_s^g$ as $\y_j$\label{line:pick}\;
	        $T_j\leftarrow\arg\max_t\{y_{js}[t] > 0 | y_{js}[t]\in\y_j, \forall s, t\}$\;
	        $U_s^g\leftarrow U_s^g + \hat{\rho}_j(\y^k)/u, \forall (g,s) \in \y_j$\label{line:update-U-2}\;
	        \Return $\y_j, T_j$\label{line:return}\;
	    	}
	      \If{there are running jobs\label{line:if-ls}}{
	      	Waiting for some job to exit and then goes to Line~\ref{line:sort-S}\label{line:repeat-ls}\;
	      }
	      \Return $\emptyset, T$\label{line:return-ls}\;
\caption{Least Busy Server-GPU First (LBSGF).}
 \label{alg:ls}
\end{algorithm} 

In Algorithm~\ref{alg:place}, $U_s^g$ denotes the accumulative execution time of worker $g$ on server $s$.
We first sort jobs in non-decreasing order of their sizes $G_j$ in Line~\ref{line:sort-J}.
We search $\theta_u$ using the bisection method in the range $[1, T]$, and use the pair $(\theta_u,\kappa)$ to perform scheduling (Lines~\ref{line:bs}-\ref{line:kappa}).
We then iterate through each job (Line~\ref{line:iterate}).
If its size is not greater than the threshold $\kappa$ (Line~\ref{line:<=}), Algorithm~\ref{alg:ff} will be used to do the scheduling (Line~\ref{line:ff}); otherwise, Algorithm~\ref{alg:ls} will be called (Line~\ref{line:ls}).
If no feasible scheduling of job $j$ is returned, then we quit the current loop and update $\kappa$ (Line~\ref{line:infeasible}); otherwise, we will update the scheduling and makespan given the current $(\theta_u, \kappa)$ (Line~\ref{line:k-update}).
Upon finishing scheduling all jobs, we will update the schedule and makespan for the given $\theta_u$ if it has a smaller makespan (Lines~\ref{line:local-compare}-\ref{line:theta-update}).
After exhausting all values of $\kappa$ for a given $\theta_u$, we will update the global makespan and the schedule if the current input $\theta_u$ has a better performance (Lines~\ref{line:global-compare}-\ref{line:global-update}).
Also, it indicates that we can further decrease the value of $\theta_u$ to find a potentially better schedule.
Thus, we search for the left half space by moving the right pointer (Line~\ref{line:right-move}); otherwise, we should increase the value of $\theta_u$ by moving the left pointer (Line~\ref{line:left-move}).
By scheduling workers as described in Algorithm~\ref{alg:place}, no worker's execution time will exceed the given limit $\theta_u$.
We denote the tightest execution time limit returned as $\tilde{\theta}_u$.

Algorithm~\ref{alg:ff} is based on the idea of ``fragment-aware first fit packing,'' where we first add all available GPUs whose $U_s^g+\hat{\rho}_j(\y^k)/u \leq \theta_u$ (Line~\ref{line:add}).
If there are enough available GPUs to schedule for job $j$'s workers (Line~\ref{line:enough}), we choose top-$G_j$ GPUs with least execution time first (Line~\ref{line:LS}).
We then evaluate the completion time of job $j$ (Line~\ref{line:t}) and update the corresponding GPUs' execution time (Line~\ref{line:update-U}); otherwise, we wait for some job to finish (Lines~\ref{line:if-ff}-\ref{line:repeat-ff}).

Algorithm~\ref{alg:ls} is based on the idea of ``least busy server-GPU first,'' where we sort the servers by its GPU's average accumulative execution time (Line~\ref{line:sort-S}) and add the available GPUs whose execution time does not exceed $\theta_u$ in a non-decreasing order (Lines~\ref{line:for}-\ref{line:sort-U}).
Here, we introduce $\lambda_j\geq 1$ as a tuning parameter.
The smaller the $\lambda_j$ is, the fewer number of servers can be used.
If enough idle workers can be found, we schedule the job, evaluate its completion time, update the execution time of the chosen GPUs, and return the schedule (Lines~\ref{line:gpus}-\ref{line:return}); otherwise, we wait for some job to finish (Lines~\ref{line:if-ls}-\ref{line:repeat-ls}).
If there is no running job left, then return schedule $\emptyset$ and timespan $T$ (as makespan) to indicate the scheduling is infeasible (Line~\ref{line:return-ls}).

\section{Performance Analysis}\label{sec:Performance}
In this section, we analyze the theoretical performance of SJF-BCO.
Specifically, we will establish the approximation ratio guarantee of our proposed SJF-BCO algorithm as follows:

\begin{list}{\labelitemi}{\leftmargin=1.2em \itemindent=-0.0em \itemsep=.2em}
\item[1)] We first show in Lemma~\ref{lem:execution time} that the maximum execution time (i.e., $\hat{W}_{\max}^{\mathrm{Alg1}}$) returned by our algorithm is equal to $\tilde{\theta}_u$.

\item[2)] We then prove that the makespan is $O(\hat{W}_{\max}^{\mathrm{Alg1}})$ in Lemma~\ref{lem:makespan}. 

\item[3)]  We further show that the gap between $\tilde{\theta}_u$ and the tightest execution time limit $\theta_u^*$ returned by some offline optimal algorithm in the right-hand-side (RHS) of \eqref{eqn:gpu_lp} is bounded in Lemma~\ref{lem:wk_gap}. 
\end{list}
Finally, by putting all these lemmas together, we arrive at the approximation ratio result stated in Theorem~\ref{thm:ratio}.

\begin{lem}[Maximum Execution Time Upperbound]\label{lem:execution time}
Algorithm~\ref{alg:place} produces a schedule with the maximum execution time $\hat{W}_{\max}^{\mathrm{Alg1}} =\tilde{\theta}_u$.
\end{lem}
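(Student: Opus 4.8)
The plan is to prove the two inequalities $\hat{W}_{\max}^{\mathrm{Alg1}} \le \tilde\theta_u$ and $\hat{W}_{\max}^{\mathrm{Alg1}} \ge \tilde\theta_u$ separately. The upper bound is a loop invariant of Algorithm~\ref{alg:place}. I would argue by induction on the order in which jobs are scheduled (the non-decreasing $G_j$ order of Line~\ref{line:sort-J}) that after each job $j$ is placed---whether through FA-FFP (Algorithm~\ref{alg:ff}) or LBSGF (Algorithm~\ref{alg:ls})---every load $U_s^g$ remains at most $\theta_u$. This is immediate from the admission test shared by both subroutines: a GPU $g$ enters $\mathcal{G}^{\theta_u}_{idle}$ only when $U_s^g + \hat{\rho}_j(\y^k)/u \le \theta_u$ (Lines~\ref{line:add} and~\ref{line:sort-U}), and only admitted GPUs receive job $j$'s load via the updates on Lines~\ref{line:update-U} and~\ref{line:update-U-2}. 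Since the bisection terminates with $\theta_u=\tilde\theta_u$, the returned schedule satisfies $U_s^g\le\tilde\theta_u$ for every $g$, hence $\hat{W}_{\max}^{\mathrm{Alg1}}\le\tilde\theta_u$.

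For the reverse inequality I would exploit the minimality of $\tilde\theta_u$ guaranteed by the bisection: $\tilde\theta_u$ is the smallest limit for which Algorithm~\ref{alg:place} returns a complete schedule, so the branch on Lines~\ref{line:right-move}--\ref{line:left-move} certifies that every strictly smaller limit fails to place all jobs. Suppose toward a contradiction that the returned schedule has $\hat{W}_{\max}^{\mathrm{Alg1}}=M<\tilde\theta_u$. The heart of the argument is an \emph{invariance claim}: re-running the algorithm with the tighter limit $\theta_u'=\lceil M\rceil$ and the same $\kappa$ that produced the returned schedule reproduces exactly the same placement. I would establish this by induction, carrying the full state (the load vector $(U_s^g)$ together with the set of running jobs) as the induction hypothesis. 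Given identical state before job $j$, the GPUs actually selected under $\tilde\theta_u$ have post-assignment load at most $M\le\theta_u'$, so each has $U_s^g\le\theta_u'-\hat{\rho}_j(\y^k)/u$ and therefore remains in $\mathcal{G}^{\theta_u'}_{idle}$; in particular $|\mathcal{G}^{\theta_u'}_{idle}|\ge G_j$, so no spurious waiting is triggered. Because both subroutines select the top-$G_j$ GPUs of \emph{least} current load $U_s^g$ (Lines~\ref{line:LS} and~\ref{line:pick}), and $\mathcal{G}^{\theta_u'}_{idle}$ is precisely the least-loaded prefix of $\mathcal{G}^{\tilde\theta_u}_{idle}$, the same $G_j$ GPUs are chosen, the load update is identical, and the state after job $j$ coincides. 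Inducting over the job list shows the schedules are identical, so $\theta_u'=\lceil M\rceil$ is itself feasible; if $\lceil M\rceil<\tilde\theta_u$ this contradicts minimality, forcing $\lceil M\rceil=\tilde\theta_u$ and hence $M>\tilde\theta_u-1$.

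The step I expect to be the main obstacle is precisely this invariance claim, for two reasons. First, the selection rule interacts with the waiting logic (Lines~\ref{line:if-ff}--\ref{line:repeat-ff} and~\ref{line:if-ls}--\ref{line:repeat-ls}), so I must verify that lowering the threshold changes neither which jobs are in progress when job $j$ is examined nor the completion-time evaluations $T_j$; treating the load vector as the sole relevant state and showing it evolves identically under $\tilde\theta_u$ and $\lceil M\rceil$ is the cleanest route, which is why I phrase the induction over the entire state. Second, there is an integrality subtlety: the bisection searches $\theta_u$ on an integer grid in $[1,T]$ while the estimates $\hat{\rho}_j(\y^k)/u$ need not be integral, so in general the above argument only pins $M$ to the half-open interval $(\tilde\theta_u-1,\tilde\theta_u]$, i.e.\ equality up to the grid resolution. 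Under the paper's convention that execution times are counted in integer time-slots, $M$ is an integer, whence $M>\tilde\theta_u-1$ together with $M\le\tilde\theta_u$ forces $M=\tilde\theta_u$ exactly. Combining the two bounds then yields $\hat{W}_{\max}^{\mathrm{Alg1}}=\tilde\theta_u$.
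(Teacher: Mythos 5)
Your decomposition is the same as the paper's, but the weight you put on each half is very different. The upper bound $\hat{W}_{\max}^{\mathrm{Alg1}}\le\tilde{\theta}_u$ via the admission tests ($U_s^g+\hat{\rho}_j(\y^k)/u\le\theta_u$ before any load update) is exactly the paper's first step, just phrased as an explicit loop invariant. For the tightness direction the paper offers only a one-sentence assertion --- that the bisection ``keeps decreasing'' the RHS of \eqref{eqn:gpu_lp} until it equals $\hat{W}_{\max}^{\mathrm{Alg1}}$ --- with no supporting argument. Your re-run invariance claim is a genuine proof of what the paper merely asserts: since the GPUs selected for each job under $\tilde{\theta}_u$ carry post-assignment load at most $M$, they survive the tighter admission test, and because both FA-FFP and LBSGF select a least-loaded prefix (and the server ranking in LBSGF depends only on loads, not on $\theta_u$), the run at $\lceil M\rceil$ reproduces the schedule state-for-state, contradicting minimality. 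Moreover, your observation that an integer bisection grid with non-integral loads $\hat{\rho}_j(\y^k)/u$ only pins $M$ to $(\tilde{\theta}_u-1,\tilde{\theta}_u]$ is a sharper reading than the paper's own proof: exact equality as stated in the lemma really does require either integral per-GPU loads or a continuous threshold search, a caveat the paper silently elides (note $u\ge1$ need not divide $\hat{\rho}_j(\y^k)$, so integrality of $M$ is an assumption, not a consequence of time-slotted scheduling).

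Two caveats, both patchable. First, your minimality premise --- ``every strictly smaller limit fails to place all jobs'' --- is not what Lines~\ref{line:global-compare}--\ref{line:left-move} literally certify: the acceptance test there is strict makespan improvement ($m_\theta < m$), not feasibility, so a smaller $\theta_u$ that places all jobs without improving the makespan also moves the left pointer. Your contradiction therefore needs the feasibility-bisection semantics the paper describes in prose (``find the minimum $\theta_u$ feasible to Problem~\eqref{problem:lp}'') rather than the pseudocode as written; since the paper's own proof leans on the same idealized semantics, this is a defect you share with the paper rather than one you introduce. Second, the invariance induction implicitly assumes deterministic tie-breaking among equal-load GPUs (and among equally ranked servers in Line~\ref{line:sort-S} of Algorithm~\ref{alg:ls}); without fixing a rule, ``the same $G_j$ GPUs are chosen'' can fail on ties, so state a tie-breaking convention before running the induction.
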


\begin{proof}
Note that in Algorithm~\ref{alg:place}, we can obtain a schedule such that the execution time of every worker will not exceed $\tilde{\theta}_u$,
i.e., $\sum_j\sum_k x_j^k\frac{\hat{\rho}_j(\y^k)}{u}\leq \tilde{\theta}_u$, $\forall g$ (cf. Line~2 in Algorithm~\ref{alg:ff} and Line~5 in Algorithm~\ref{alg:ls}).
Note that $\tilde{\theta}_u$ is the tightest value found by Alg.~\ref{alg:place} since we will keep decreasing its value in the RHS of \eqref{eqn:gpu_lp} until it becomes equal to $\hat{W}_{\max}^{\mathrm{Alg1}}$ in the LHS of \eqref{eqn:gpu_lp}.
It then follows that:
\begin{align*}
\hat{W}_{\max}^{\mathrm{Alg 1}} = \max_{g\in\mathcal{N}}\sum_{j\in\mathcal{J}}\sum_{k\in\{1,\hdots,|\mathcal{Y}|\}}x_j^k\frac{\hat{\rho}_j(\y^k)}{u} = \tilde{\theta}_u.
\end{align*}
Thus, we can have the maximum execution time $\hat{W}_{\max}^{\mathrm{Alg1}}$ is equal to $\tilde{\theta}_u$, and the proof is complete.
\end{proof}


\begin{lem}[Makespan Upperbound]\label{lem:makespan}
Algorithm~\ref{alg:place} achieves a makespan at most $n_g\hat{W}^{\mathrm{Alg1}}_{\max}$, where $n_g$ is defined as in Theorem~\ref{thm:np}. 
\end{lem}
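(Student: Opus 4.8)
The plan is to bound the makespan $M$ achieved by Algorithm~\ref{alg:place} in terms of the maximum per-GPU execution time $\hat{W}_{\max}^{\mathrm{Alg1}}$ established in Lemma~\ref{lem:execution time}, exploiting the gang-scheduling invariant that every job occupies at most $n_g=\max_j G_j$ GPUs simultaneously. First I would record two consequences of the algorithm's construction. (i) On every GPU $g\in\mathcal{N}$ the total busy time equals its accumulated load $\sum_{j,k} x_j^k\,\hat{\rho}_j(\y^k)/u \le \tilde{\theta}_u = \hat{W}_{\max}^{\mathrm{Alg1}}$; in particular the length contributed by any single job to a GPU is at most $\hat{W}_{\max}^{\mathrm{Alg1}}$. (ii) Whenever a GPU sits idle during the run, Algorithm~\ref{alg:place} is necessarily in a waiting state (cf. Lines~\ref{line:if-ff}--\ref{line:repeat-ff} in Algorithm~\ref{alg:ff} and Lines~\ref{line:if-ls}--\ref{line:repeat-ls} in Algorithm~\ref{alg:ls}): some job $j$ cannot yet be placed because fewer than $G_j\le n_g$ admissible GPUs are currently free.

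Next I would isolate the critical GPU $g^\star$ on which the last-finishing job completes at time exactly $M$, and write $M = (\text{busy time of } g^\star) + (\text{idle time of } g^\star)$. By (i) the busy term is at most $\hat{W}_{\max}^{\mathrm{Alg1}}$, so the entire task reduces to bounding the idle time of $g^\star$ by $(n_g-1)\,\hat{W}_{\max}^{\mathrm{Alg1}}$. To control the idle time I would build a chain of jobs backward from the last one: let $j_1$ be the job finishing at $M$, and whenever a job $j_i$ starts at $a_{j_i}$, let $j_{i+1}$ be the job whose completion at $a_{j_i}$ released the GPU that triggered $j_i$'s placement. By construction consecutive chain jobs share a GPU and are contiguous in time, so the chain partitions $[0,M]$ and $M=\sum_i \rho_j(\y^{k})$ summed over the chain.

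I would then aim to charge this chain to GPU loads using property (ii): since at most $n_g-1$ GPUs are ever simultaneously free during an idle interval, the blocked resources responsible for delaying successive chain jobs can be organized into at most $n_g$ ``tracks,'' each confined to a single GPU and therefore of total length at most $\hat{W}_{\max}^{\mathrm{Alg1}}$ by (i). Summing over the $\le n_g$ tracks would give $M \le n_g\,\hat{W}_{\max}^{\mathrm{Alg1}}$, and combining with Lemma~\ref{lem:execution time} completes the argument.

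The main obstacle is precisely this last grouping step. A naive critical-path argument only bounds $M$ by (number of distinct GPUs touched by the chain)$\,\times\,\hat{W}_{\max}^{\mathrm{Alg1}}$, which can be as large as $|\mathcal{N}|$ and is far weaker than the target factor $n_g$. Pulling the factor down to $n_g$ genuinely requires the ``at most $n_g$ GPUs per job'' width bound together with the load cap $\theta_u$ and the least-busy-first placement of Algorithms~\ref{alg:ff}--\ref{alg:ls}, so that each idle segment of $g^\star$ is attributed to at most $n_g-1$ concurrently blocked resources rather than to arbitrarily many GPUs. I expect the delicate bookkeeping to be the distinction between GPUs that are \emph{busy} and those that are already \emph{full} (at load $\theta_u$), since full GPUs prolong idling without contributing further work and must be accounted for separately when bounding the number of tracks.
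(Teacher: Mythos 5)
Your decomposition of the makespan on a critical GPU into busy time plus idle time, with the busy part bounded by $\hat{W}_{\max}^{\mathrm{Alg1}}$ via Lemma~\ref{lem:execution time}, matches the paper's first step. But the rest diverges, and the divergence is exactly where the argument breaks: the chain-plus-tracks construction is never completed, and you concede as much (``the main obstacle is precisely this last grouping step''). The concession is warranted. Nothing in the algorithm forces your backward chain to touch few GPUs, so the naive bound it yields is (number of distinct GPUs on the chain) times $\hat{W}_{\max}^{\mathrm{Alg1}}$, i.e., up to $|\mathcal{N}|\,\hat{W}_{\max}^{\mathrm{Alg1}}$, as you yourself note. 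Your proposed rescue --- that at most $n_g-1$ GPUs are ever simultaneously free during an idle interval, hence at most $n_g$ ``tracks'' --- is both unproven and not correct as stated: during a waiting period for a job $j$, what is scarce is \emph{admissible} GPUs (free and with residual cap at least $\hat{\rho}_j(\y^k)/u$), while arbitrarily many GPUs can be simultaneously free but already at the cap $\theta_u$; and even a correct instantaneous bound on free GPUs does not by itself partition the chain's delays into $n_g$ single-GPU tracks. So as written the proposal does not prove the lemma.

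The idea you are missing is that no chain is needed at all. The paper fixes the worker $g$ attaining the makespan and looks only at the \emph{last} job $j$ scheduled on $g$, with gang $\mathcal{G}_j(\y^k)$ of size $G_j \le n_g$. Because scheduling is non-preemptive gang scheduling and a job cannot be delayed once a sufficient number of its requested GPUs are simultaneously free, every idle slot of $g$ occurs while $j$ is still delayed, and during any such slot at least one gang member $g' \in \mathcal{G}_j(\y^k)$ with $g' \ne g$ must be busy --- otherwise $j$ would already have started. Charging each idle slot of $g$ to a busy slot of some gang member, and bounding each gang member's total busy time by $\hat{W}_{\max}^{\mathrm{Alg1}}$ (Lemma~\ref{lem:execution time}), gives $T_g^{\mathrm{idle}} \le (G_j-1)\hat{W}_{\max}^{\mathrm{Alg1}}$, hence makespan at most $G_j\hat{W}_{\max}^{\mathrm{Alg1}} \le n_g\hat{W}_{\max}^{\mathrm{Alg1}}$. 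The factor $n_g$ thus comes from the width of a \emph{single} job's gang, not from counting GPUs along a critical path; this also dissolves your busy-versus-full bookkeeping worry, since the charging is to the gang that $j$ is eventually placed on, whose members are by construction admissible for $j$, so cap-saturated GPUs elsewhere never enter the accounting.
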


\begin{proof}
To bound the makespan, we need to attain upper bounds of the total {\it busy} and {\it idle} time for each worker. 
Recall that due to the synchronous gang scheduling for training, the worker may wait for other workers to finish executing other jobs before it could start processing the current job, which may result in idling.
First, we can have the total busy time $T_g^{\mathrm{busy}}\leq \hat{W}^{\mathrm{Alg1}}_{\max}\overset{\mathrm{Lem.}~\ref{lem:execution time}}{=}\tilde{\theta}_u$. 
Next, we work on bounding the total idle time $T_g^{\mathrm{idle}}$.

For any worker $g\in\mathcal{N}$, we use $g_j$ to denote the last job $j$ on $g$.
Suppose job $j$ follows schedule $\y^k$.
At any time slot $t$ before worker $g$ processes job $j$, there are two cases: i) worker $g$ is occupied by other jobs (i.e., $g$ is busy); ii) worker $g$ is idle, but at least one worker $g'\in\mathcal{G}_j(\y^k)$ is busy with executing other jobs.
Since we consider the gang-scheduling discipline, 
the job cannot be delayed if there is a sufficient number of GPUs available as requested.
Thus we have:
\begin{align*}
T_g^{\mathrm{idle}} \overset{\mathrm{(a)}}{\leq}\!\!\!\!\sum_{g'\in\mathcal{G}_j(\y^k)|g'\neq g}\!\!\!\!T_{g'}^{\mathrm{busy}}\overset{}{\leq}\sum_{g'\in\mathcal{G}_j(\y^k)|g'\neq g}\!\!\!\!\hat{W}^{\mathrm{Alg 1}}_{\max}\overset{\mathrm{(b)}}{\leq}(G_j\!-\!1)\hat{W}^{\mathrm{Alg 1}}_{\max},
\end{align*}
where (a) follows from the fact that in any time slot $t$ that worker $g$ is idle (case ii), we must be able to find at least one busy worker $g'\in\mathcal{G}_j(\y^k)$.
To calculate the idle time of worker $g$, we can calculate the busy time of worker(s) $g'\in\mathcal{G}_j(\y^k)$ instead, and the limit of each worker's busy time is $\hat{W}_{\max}^{\mathrm{Alg1}}$.
Also, (b) follows from the fact that at most $G_j-1$ number of GPUs (except worker $g$)  are busy. 
Then, we can upper bound the makespan $T^{\mathrm{total}}$ as:
\begin{align*}
T^{\mathrm{total}} &= \max_{g\in\mathcal{N}}(T_g^{\mathrm{busy}} + T_g^{\mathrm{idle}})\leq  \max_{j\in\mathcal{J}}\bigg(\hat{W}_{\max}^{\mathrm{Alg 1}} + (G_j-1)\hat{W}_{\max}^{\mathrm{Alg1}}\bigg)\\
&=\max_{j\in\mathcal{J}}G_j\hat{W}_{\max}^{\mathrm{Alg 1}}= n_g\hat{W}_{\max}^{\mathrm{Alg 1}},
\end{align*}
and the proof is complete.
\end{proof}

Next, we characterize the gap between the maximum execution time limit $\tilde{\theta}_u$ and the optimal execution time $\theta^*_u$ in the RHS of \eqref{eqn:gpu_lp}.

\begin{lem}\label{lem:wk_gap}
The maximum execution time $\tilde{\theta}_u$ returned by Algorithm~\ref{alg:place} satisfies $\tilde{\theta}_u\leq \varphi\frac{u}{l}\cdot\theta^*_u$, where $\varphi=\max_j\frac{\rho_j(\y^{k_1})}{\rho_j(\y^{k_2})}, \forall k_1, k_2$.
\end{lem}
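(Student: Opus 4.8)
The plan is to compare the execution-time limit $\tilde{\theta}_u$ found by Algorithm~\ref{alg:place} against the optimal limit $\theta_u^*$ by relating both to the underlying true processing times $\rho_j(\y^k)$. The key observation is that Algorithm~\ref{alg:place} uses estimated processing times $\hat{\rho}_j(\y^k)/u$ in constraint~\eqref{eqn:gpu_lp}, whereas the offline optimal baseline works with the true (but schedule-dependent) processing times. First I would invoke the boundedness of the estimate, namely $\hat{\rho}_j(\y^k)\in[l\rho_j(\y^k),u\rho_j(\y^k)]$, which was established earlier from the fact that $\tau_j[t]$ lies in a bounded range. This immediately yields the two-sided sandwich $\frac{l}{u}\rho_j(\y^k)\leq \frac{\hat{\rho}_j(\y^k)}{u}\leq \rho_j(\y^k)$, so the per-GPU load that Algorithm~\ref{alg:place} controls underestimates the true load by at most a factor $u/l$.

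Next I would argue that the schedule-dependence of $\rho_j$ costs at most a factor $\varphi=\max_j\frac{\rho_j(\y^{k_1})}{\rho_j(\y^{k_2})}$. The subtle point is that the optimal offline algorithm may pick a different schedule $\y^{k_2}$ for a job than the one $\y^{k_1}$ selected by Algorithm~\ref{alg:place}, and since true processing time varies with the schedule (through contention), the loads are not directly comparable across schedules. By definition of $\varphi$, however, for any job $j$ and any two schedules we have $\rho_j(\y^{k_1})\leq \varphi\,\rho_j(\y^{k_2})$, so replacing one schedule's cost by another inflates the per-job contribution by at most $\varphi$. The plan is to carry this factor through the per-GPU load summation in constraint~\eqref{eqn:gpu_lp}.

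The argument would then chain these bounds. Because $\tilde{\theta}_u$ is, by Lemma~\ref{lem:execution time}, exactly the maximum over GPUs of $\sum_{j,k}x_j^k\frac{\hat{\rho}_j(\y^k)}{u}$ under Algorithm~\ref{alg:place}'s schedule, I would upper-bound each summand $\frac{\hat{\rho}_j(\y^k)}{u}\leq \rho_j(\y^k)\leq \varphi\,\rho_j(\y^{k^*})$, where $\y^{k^*}$ is the schedule chosen for job $j$ by the optimal algorithm, and similarly lower-bound the optimal-algorithm's controlled load by $\frac{l}{u}\rho_j(\y^{k^*})\leq \frac{\hat{\rho}_j(\y^{k^*})}{u}$. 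Combining the GPU that attains $\tilde{\theta}_u$ with the fact that the optimal algorithm keeps each GPU's estimated load below $\theta_u^*$, I would obtain $\tilde{\theta}_u\leq \varphi\frac{u}{l}\theta_u^*$, which is the claimed inequality.

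The main obstacle I anticipate is the cross-schedule comparison: the loads on a fixed GPU $g$ under Algorithm~\ref{alg:place}'s schedule versus under the optimal schedule are sums over \emph{different} job-schedule assignments, so the factor $\varphi$ must be applied carefully job-by-job while ensuring that the set of jobs contributing load to the critical GPU is handled consistently across the two schedules. In particular, I would need to ensure that the bisection in Algorithm~\ref{alg:place} genuinely returns the tightest feasible $\theta_u$ (so that $\tilde\theta_u$ cannot be spuriously large) and that the $\varphi$-inflation is not compounded across GPUs but applied only within the single per-GPU load comparison that defines the maximum. The rest of the argument is a routine chaining of the two-sided estimate bounds.
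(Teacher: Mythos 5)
Your first two steps reproduce the paper's step (a) exactly: chaining $\hat{\rho}_j(\y^k)\in[l\rho_j(\y^k),u\rho_j(\y^k)]$ with the definition of $\varphi$ gives the per-job inflation $\frac{\hat{\rho}_j(\y^{\tilde{k}})}{u}\leq \varphi\frac{u}{l}\cdot\frac{\hat{\rho}_j(\y^{k^*})}{u}$, and your use of Lemma~\ref{lem:execution time} to identify $\tilde{\theta}_u$ with the maximum per-GPU estimated load is also the paper's opening move. The problem is the final ``combining'' step, and you named the obstacle yourself without resolving it. After the per-job inflation, the load on the critical GPU $g$ of Algorithm~\ref{alg:place}'s schedule is bounded by $\varphi\frac{u}{l}$ times the sum of $\frac{\hat{\rho}_j(\y^{k^*})}{u}$ \emph{over the jobs that sit on $g$ under $\y^{\tilde{k}}$}. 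Constraint~\eqref{eqn:gpu_lp} with limit $\theta^*_u$ only bounds such sums taken over jobs colocated on a single GPU \emph{under the optimal assignment $\y^{k^*}$}. If the jobs on $g$ under $\y^{\tilde{k}}$ are spread over, say, $r$ distinct GPUs under $\y^{k^*}$, the sum you need to bound can be as large as $r\theta^*_u$, and the claimed inequality does not follow. Saying the factor ``must be applied carefully job-by-job while ensuring the set of jobs \ldots is handled consistently'' flags the issue but supplies no mechanism; the bisection tightness you invoke only underwrites Lemma~\ref{lem:execution time} and is irrelevant to this cross-schedule mismatch.

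The missing idea is the paper's inequality (b): because Algorithms~\ref{alg:ff} and~\ref{alg:ls} always pick the top-$G_j$ GPUs with \emph{least accumulated execution time}, the schedule $\y^{\tilde{k}}$ satisfies a load-balancing property, namely $\max_{g\in\mathcal{G}(\y^{\tilde{k}})}\sum_{j}\frac{\hat{\rho}_j(\y^{\tilde{k}})}{u}\leq \max_{g\in\mathcal{G}(\y^{k})}\sum_{j}\frac{\hat{\rho}_j(\y^{k})}{u}$ for every $k$, proved by an exchange/contradiction argument: if some GPU used by $\y^{k}$ but not by $\y^{\tilde{k}}$ carried strictly smaller load, the least-execution-time-first rule would have selected it. Instantiating $k=k^*$ converts the maximum over $\mathcal{G}(\y^{\tilde{k}})$ into a maximum over $\mathcal{G}(\y^{k^*})$, at which point~\eqref{eqn:gpu_lp} legitimately applies and yields $\tilde{\theta}_u\leq\varphi\frac{u}{l}\theta^*_u$. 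Your proposal never mentions this greedy selection property of the subroutines, so as written the chain of inequalities breaks at its last link; supplying the exchange argument would complete it and would make your proof essentially identical to the paper's.
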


\begin{proof}
Let $k^*$ and $\tilde{k}$ be the schedule indices chosen by solving Problem~(\ref{problem:lp}) optimally and Algorithm~\ref{alg:place}, respectively.
Let $\mathcal{G}(\y^{k})$ be the set of selected GPUs if schedule $\y^k$ is used.
We have
\begin{align*}
\tilde{\theta}_u &\overset{\mathrm{Lem.~\ref{lem:execution time}}}{=} \max_{g\in\mathcal{G}(\y^{\tilde{k}})} \sum_{j\in\mathcal{J}}\frac{\hat{\rho}_j(\y^{\tilde{k}})}{u}\overset{\mathrm{(a)}}{\leq} \max_{g\in\mathcal{G}(\y^{\tilde{k}})} \sum_{j\in\mathcal{J}}\frac{\varphi\frac{u}{l}\hat{\rho}_j(\y^{k^*})}{u}\\
&\overset{\mathrm{(b)}}{\leq} \max_{g\in\mathcal{G}(\y^{k^*})} \sum_{j\in\mathcal{J}}\frac{\varphi\frac{u}{l}\hat{\rho}_j(\y^{k^*})}{u} \overset{\mathrm{Eq.}~\eqref{eqn:gpu_lp}}{\leq} \varphi\frac{u}{l}\theta^*_u.
\end{align*}
To see why (a) holds, recall that for any schedule $\y^k$, we have $\hat{\rho}_j(\y^k)\in[l\rho_j(\y^k), u\rho_j(\y^k)]$.
Then, for any two different schedules $\y^{k_1}$ and $\y^{k_2}$, we can calculate the worst-case ratio as $\frac{\hat{\rho}_j(\y^{k_1})}{\hat{\rho}_j(\y^{k_2})}\leq \frac{u\rho_j(\y^{k_1})}{l\rho_j(\y^{k_2})}\leq\varphi\frac{u}{l}$.
The inequality in (b) can be established as follows.
First, note that $\tilde{k}$ is chosen using the {\it least execution time first} scheduling strategy in Algorithm~\ref{alg:ff} (Line~\ref{line:LS}).
Then, we have $\max_{g\in\mathcal{G}(\y^{\tilde{k}})} \sum_{j\in\mathcal{J}}\frac{\hat{\rho}_j(\y^{\tilde{k}})}{u}\leq \max_{g\in\mathcal{G}(\y^{k})} \sum_{j\in\mathcal{J}}\frac{\hat{\rho}_j(\y^{k})}{u}$, $\forall k$, which can be proved by contradiction as follows.
Suppose there exists $g\in\mathcal{G}(\y^{k})\setminus \mathcal{G}(\y^{\tilde{k}})$ such that $\sum_{j\in\mathcal{J}}\frac{\hat{\rho}_j(\y^{k})}{u}\leq \sum_{j\in\mathcal{J}}\frac{\hat{\rho}_j(\y^{\tilde{k}})}{u}$, $\forall g'\in\mathcal{G}(\y^{\tilde{k}})$.
However, we know that $\tilde{k}$ chooses the GPUs with the least execution time first, i.e., $g$ should be in $\mathcal{G}(\y^{\tilde{k}})$, which contradicts our assumption.
This completes the proof.
\end{proof}

Finally, by putting everything together, we have the following approximation ratio for our proposed approach:
\begin{thm}[Approximation Ratio]\label{thm:ratio}
Alg.~\ref{alg:place} is $n_g\varphi\frac{u}{l}$-approximate.
\end{thm}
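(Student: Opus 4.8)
The plan is to chain together the three lemmas already established in the performance-analysis section. The target bound $n_g\varphi\frac{u}{l}$ factors exactly as $n_g\cdot\left(\varphi\frac{u}{l}\right)$, which strongly suggests that $n_g$ comes from the makespan-to-execution-time gap (Lemma~\ref{lem:makespan}) and that $\varphi\frac{u}{l}$ comes from the execution-time-to-optimal gap (Lemma~\ref{lem:wk_gap}). So the whole proof should be a short computation composing these inequalities, with Lemma~\ref{lem:execution time} serving as the bridge that identifies the algorithm's maximum execution time $\hat{W}_{\max}^{\mathrm{Alg1}}$ with the returned threshold $\tilde{\theta}_u$.

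First I would write the makespan $T^{\mathrm{total}}$ produced by Algorithm~\ref{alg:place} and apply Lemma~\ref{lem:makespan} to get $T^{\mathrm{total}}\leq n_g\hat{W}_{\max}^{\mathrm{Alg1}}$. Next I would invoke Lemma~\ref{lem:execution time} to rewrite $\hat{W}_{\max}^{\mathrm{Alg1}}=\tilde{\theta}_u$, and then apply Lemma~\ref{lem:wk_gap} to bound $\tilde{\theta}_u\leq\varphi\frac{u}{l}\theta^*_u$. The remaining conceptual step is to relate $\theta^*_u$ to the true optimal makespan: since $\theta^*_u$ is the tightest feasible execution-time limit for the relaxed Problem~\eqref{problem:lp} achievable by an offline optimal algorithm, and since the maximum per-worker execution time is always a lower bound on any achievable makespan (each worker's accumulated work must be completed within the makespan), we have $\theta^*_u\leq T^{\mathrm{OPT}}$. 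Stringing these together yields
\begin{align*}
T^{\mathrm{total}}\leq n_g\hat{W}_{\max}^{\mathrm{Alg1}}=n_g\tilde{\theta}_u\leq n_g\varphi\frac{u}{l}\theta^*_u\leq n_g\varphi\frac{u}{l}T^{\mathrm{OPT}},
\end{align*}
which is precisely the claimed $n_g\varphi\frac{u}{l}$-approximation.

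The step I expect to be the main obstacle — or at least the one requiring the most care — is justifying that $\theta^*_u$ is a genuine lower bound on the optimal makespan rather than merely a bound on the optimal maximum execution time. The subtlety is that the execution-time formulation in Problem~\eqref{problem:lp} uses the \emph{estimated} processing time $\hat{\rho}_j(\y^k)/u$, so I must argue that the gap factor $\varphi\frac{u}{l}$ already absorbs the discrepancy between estimated and true processing times (this is exactly what the $u/l$ factor in Lemma~\ref{lem:wk_gap} accounts for, via $\hat{\rho}_j(\y^k)\in[l\rho_j(\y^k),u\rho_j(\y^k)]$), and that the synchronization-induced idling is what forces the extra factor $n_g$ rather than a factor of $1$. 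I would therefore state explicitly that the maximum execution time lower-bounds the makespan because idling can only increase completion times, which is the same observation used informally just after Problem~\eqref{problem:lp} when noting that $W_{\max}^\pi$ is ``a lower bound of the makespan.'' Once this lower-bound relationship is pinned down, the theorem follows immediately by the composition above, so the proof is essentially a one-line chaining of the three lemmas plus this lower-bound remark.
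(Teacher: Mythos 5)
Your proposal is correct and matches the paper's proof essentially verbatim: the same chain $T^{\mathrm{total}}\leq n_g\hat{W}_{\max}^{\mathrm{Alg1}}=n_g\tilde{\theta}_u\leq n_g\varphi\frac{u}{l}\theta^*_u\leq n_g\varphi\frac{u}{l}T^*$, invoking Lemmas~\ref{lem:makespan}, \ref{lem:execution time}, and \ref{lem:wk_gap} in the same order. Your careful justification of the final step ($\theta^*_u\leq T^*$ via the underestimate $\hat{\rho}_j(\y^k)/u\leq\rho_j(\y^k)$ and the fact that idling only increases the makespan) is exactly the paper's step~(a), spelled out slightly more explicitly.
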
 
\begin{proof}
We use $T^*$ to denote the optimal makespan that produced by some offline optimal algorithm.
It then follows that
\begin{align*}
T^{total}&\overset{\mathrm{Lem.}\ref{lem:makespan}}{\leq}n_g\hat{W}^{\mathrm{Alg1}}_{\max}\overset{\mathrm{Lem.}\ref{lem:execution time}}{=}n_g\tilde{\theta}_u\overset{\mathrm{Lem.}\ref{lem:wk_gap}}{\leq}n_g\varphi\frac{u}{l}\theta^*_u\overset{\mathrm{(a)}}{\leq}n_g\varphi\frac{u}{l}T^{*},
\end{align*}
where (a) is due to Problem~(\ref{problem:lp}) estimates the processing time as $\frac{\hat{\rho}_j(\y^k)}{u}$ without considering potential idling (caused by synchronization barrier), which implies $\theta^*_u\!\leq\!T^{*}$.
This completes the proof.
\end{proof}

\begin{remark}{\em
Note that the result in Theorem~\ref{thm:ratio} does not depend explicitly on the parameter $\kappa$ in SJF-BCO.
This is because Theorem~\ref{thm:ratio} is only a worst-case upper bound that depends on $\tilde{\theta}_u$, which in turn depends on $\kappa$.
Hence, $\kappa$ is implicitly captured in Theorem~\ref{thm:ratio}. 
}
\end{remark}

\begin{thm}[Polynomial Running Time]\label{thm:time}
Time complexity of SJF-BCO is $O(n_g|\mathcal{J}|N\log N\log T)$, where $n_g$ is defined as in Thm.~\ref{thm:np}.
\end{thm}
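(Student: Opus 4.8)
The plan is to bound the total running time of SJF-BCO by multiplying the cost of a single scheduling pass by the number of times that pass is executed across all loop levels. First I would count the loop iterations in Algorithm~\ref{alg:place}: the outer bisection search on $\theta_u$ over the range $[1,T]$ terminates in $O(\log T)$ iterations; inside each bisection step, the loop over $\kappa$ in Line~\ref{line:kappa} runs $O(n_g)$ times since $\kappa$ ranges over $1,\dots,\max_j G_j = n_g$; and inside each $\kappa$-iteration, the loop over jobs in Line~\ref{line:iterate} runs $|\mathcal{J}|$ times. This gives a multiplicative factor of $O(n_g|\mathcal{J}|\log T)$ scheduling calls.

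Next I would bound the cost of a single job-placement call, which invokes either Algorithm~\ref{alg:ff} (FA-FFP) or Algorithm~\ref{alg:ls} (LBSGF). For FA-FFP, the dominant cost is collecting the idle GPUs and picking the top-$G_j$ with least $U_s^g$; sorting or selecting among $N$ GPUs costs $O(N\log N)$. For LBSGF, the server sort in Line~\ref{line:sort-S} costs $O(|\mathcal{S}|\log|\mathcal{S}|)$ and the GPU sort across selected servers in Lines~\ref{line:for}--\ref{line:sort-U} costs $O(N\log N)$; since $|\mathcal{S}|\leq N$, the per-call cost is dominated by the $O(N\log N)$ GPU sorting step. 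Thus each placement call costs $O(N\log N)$. The initial job sort in Line~\ref{line:sort-J} costs only $O(|\mathcal{J}|\log|\mathcal{J}|)$, which is absorbed. Multiplying the number of calls by the per-call cost yields $O(n_g|\mathcal{J}|\cdot N\log N\cdot\log T) = O(n_g|\mathcal{J}|N\log N\log T)$, matching the claimed bound.

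The main obstacle I anticipate is handling the ``wait for some job to exit'' branches (Lines~\ref{line:if-ff}--\ref{line:repeat-ff} in Algorithm~\ref{alg:ff} and Lines~\ref{line:if-ls}--\ref{line:repeat-ls} in Algorithm~\ref{alg:ls}), which loop back and could in principle be revisited many times. I would argue that each such wait-and-retry is triggered only when a currently running job releases its GPUs, and since there are at most $|\mathcal{J}|$ jobs that can be released, the number of retries is bounded and does not inflate the asymptotic order beyond the already-counted job loop; alternatively, one treats the retry as amortized into the single placement attempt so that the per-call cost remains $O(N\log N)$. A secondary subtlety is confirming that evaluating $T_j$ and updating the $U_s^g$ values (e.g.\ Line~\ref{line:t} and Line~\ref{line:update-U}) costs at most $O(N)$ and is therefore subsumed by the $O(N\log N)$ sorting cost, rather than contributing a higher-order term. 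Once these bookkeeping costs are verified to be dominated by the sorting step, the product of the three loop bounds with the per-call sorting cost delivers the stated complexity.
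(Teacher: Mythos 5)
Your proposal is correct and follows essentially the same route as the paper: count the loop structure (bisection over $\theta_u$ giving $O(\log T)$, the $\kappa$ loop giving $O(n_g)$, the job loop giving $|\mathcal{J}|$) and bound each placement call by the dominant $O(N\log N)$ sorting cost, with the server sort in LBSGF absorbed since $|\mathcal{S}| \leq N$. If anything you are more careful than the paper, which silently ignores both the GPU-level sort inside LBSGF and the ``wait for some job to exit'' retries; your amortization observation (each job exits at most once per $(\theta_u,\kappa)$ pass, so total retry work per pass stays $O(|\mathcal{J}|N\log N)$) is the right way to close that gap.
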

\begin{proof}
The sorting operation plays a dominant role in the total running time in Algorithm~\ref{alg:place}.
For each job $j$, if $G_j\leq\kappa$, we need to sort all GPUs in the cluster, which takes $O(N\log N)$ time in order to choose top-$G_j$ workers with least execution time first in Algorithm~\ref{alg:ff} (Line~\ref{line:LS}).
Otherwise, we only need to sort servers, which takes $O(S\log S)$ time in order to choose top-$m$ servers as in Algorithm~\ref{alg:ls} (Line~\ref{line:sort-S}).
Thus, it takes $O(N\log N)$ time to schedule each job since $N > S$.
Then, for all the jobs to be scheduled given ($\theta_u, \kappa$), it has $O(|\mathcal{J}|N\log N)$ time complexity.
Recall that we use bisection to search $\theta_u$, where each iteration contains an inner loop indexed by $\kappa\!\in\![1, n_g]$.
This implies a total of $n_g\log T$ trials.
Thus, the overall time complexity is $O(n_g|\mathcal{J}|N\log N\log T)$.
\end{proof}


\begin{figure*}[t!]
       \begin{minipage}[t]{0.24\linewidth}
        \centering
        \includegraphics[width=1.05\textwidth]{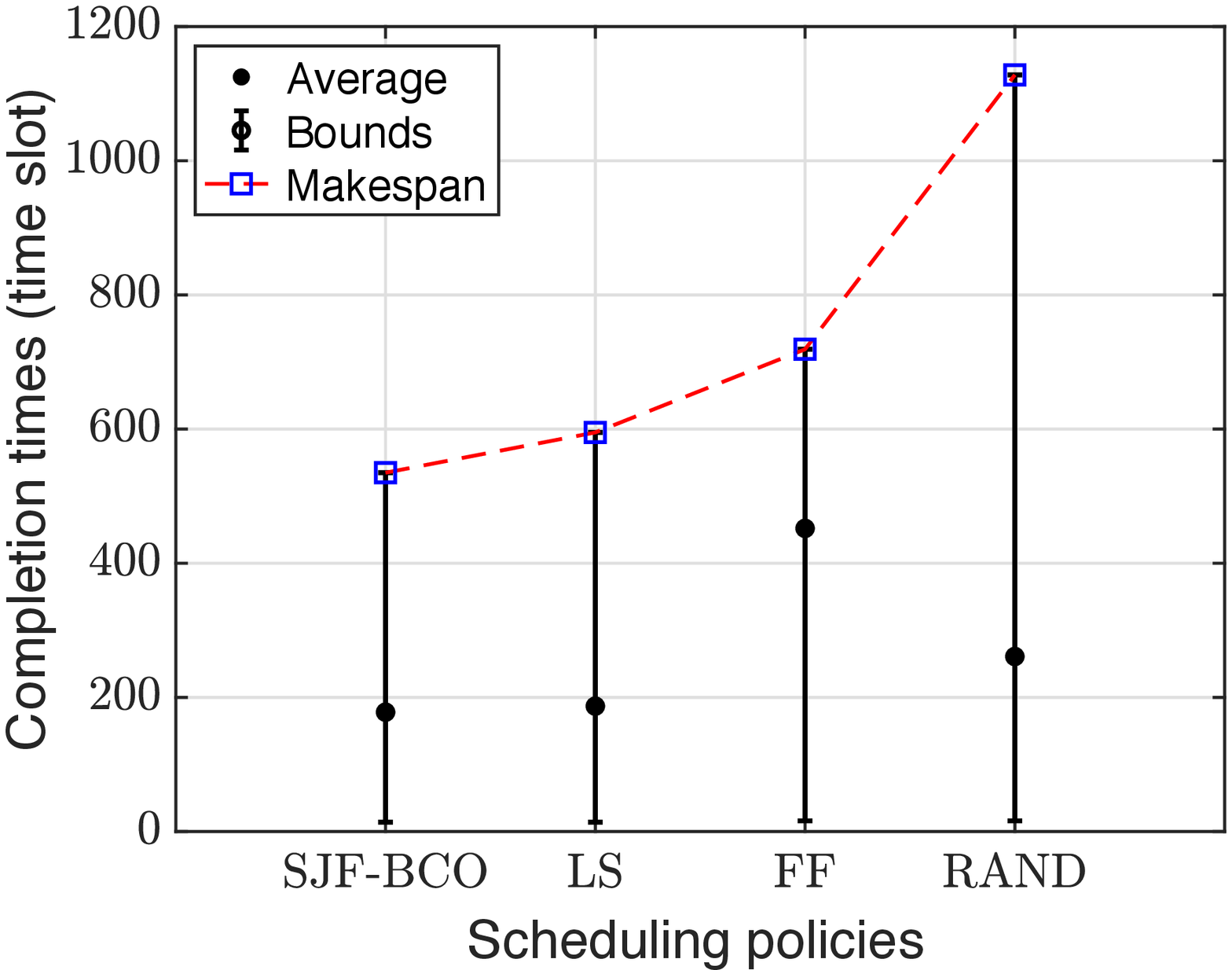}
        \caption{Makespan comparison under different policies.}\label{fig:baseline} 
    \end{minipage}%
    \hspace{0.006\linewidth}
    \begin{minipage}[t]{0.24\linewidth}
        \centering
        \includegraphics[width=1.04\textwidth]{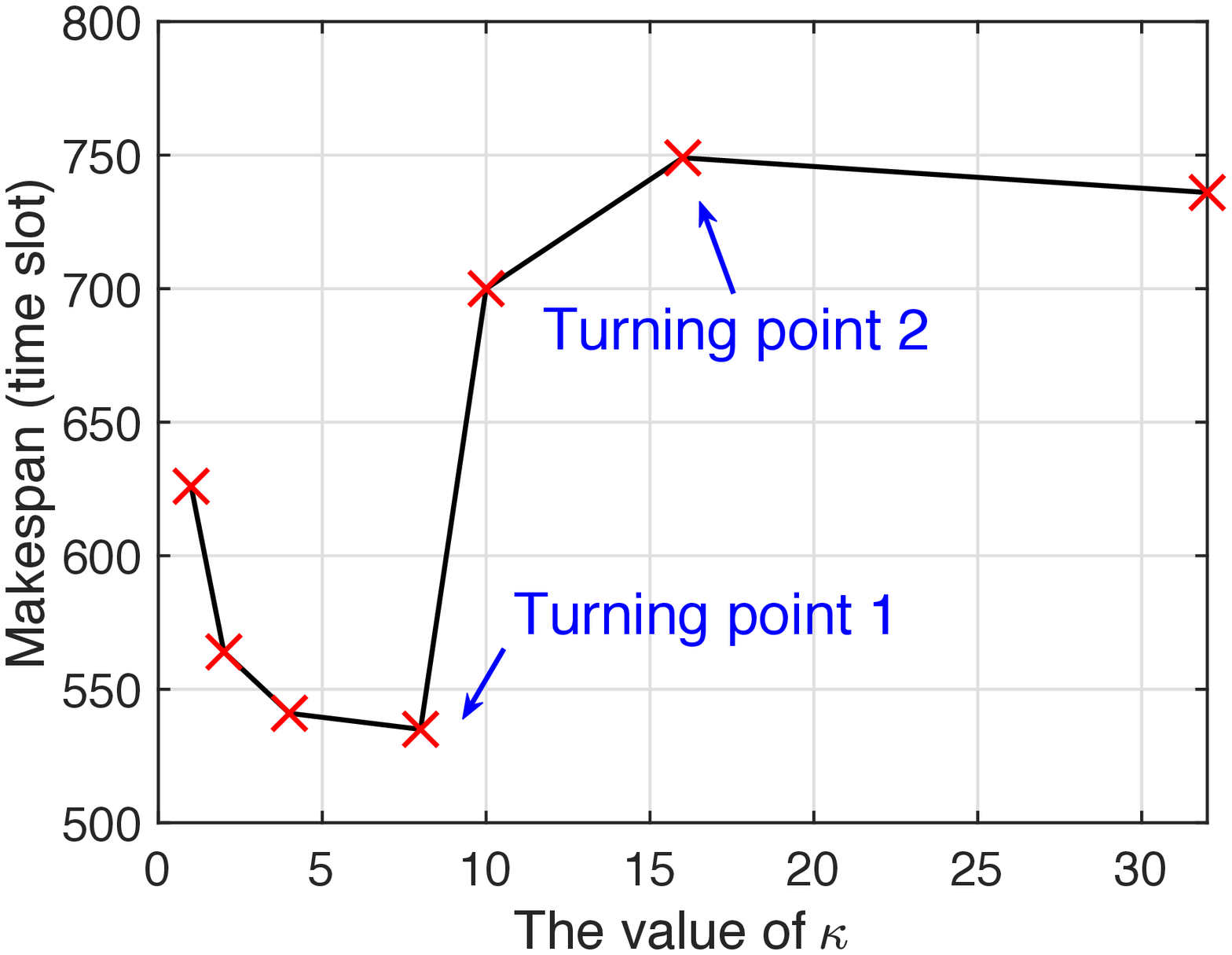}      
        \caption{Impact of value of $\kappa$ on makespan.} \label{fig:kappa}
    \end{minipage}%
      \hspace{0.01\textwidth}
         \begin{minipage}[t]{0.24\linewidth}
        \centering
        \includegraphics[width=1.04\textwidth]{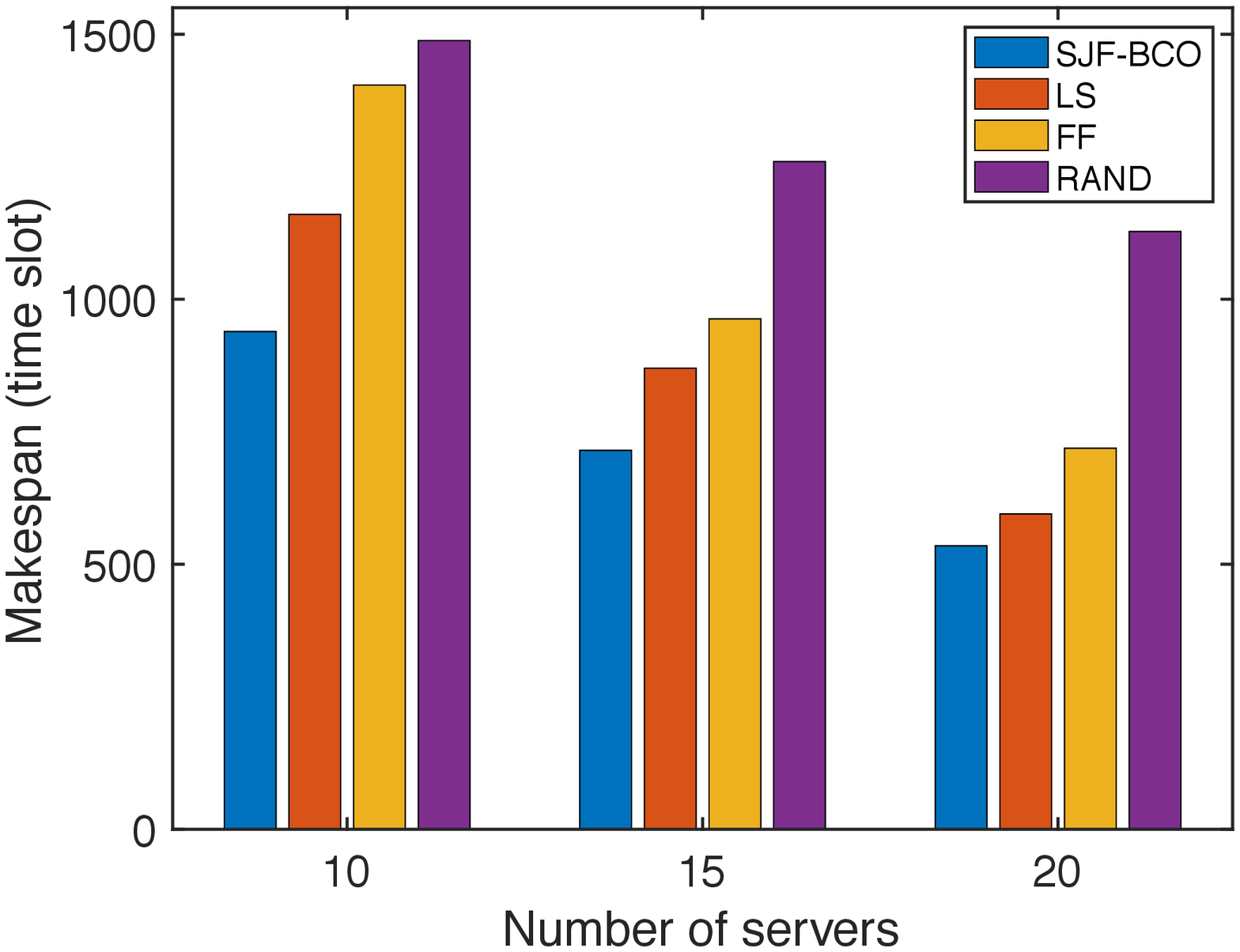}
        \caption{Makespan as the number of servers increases.} \label{fig:contention}
    \end{minipage}%
           \hspace{0.007\textwidth}
    \begin{minipage}[t]{0.24\linewidth}
        \centering
        \includegraphics[width=1.01\textwidth]{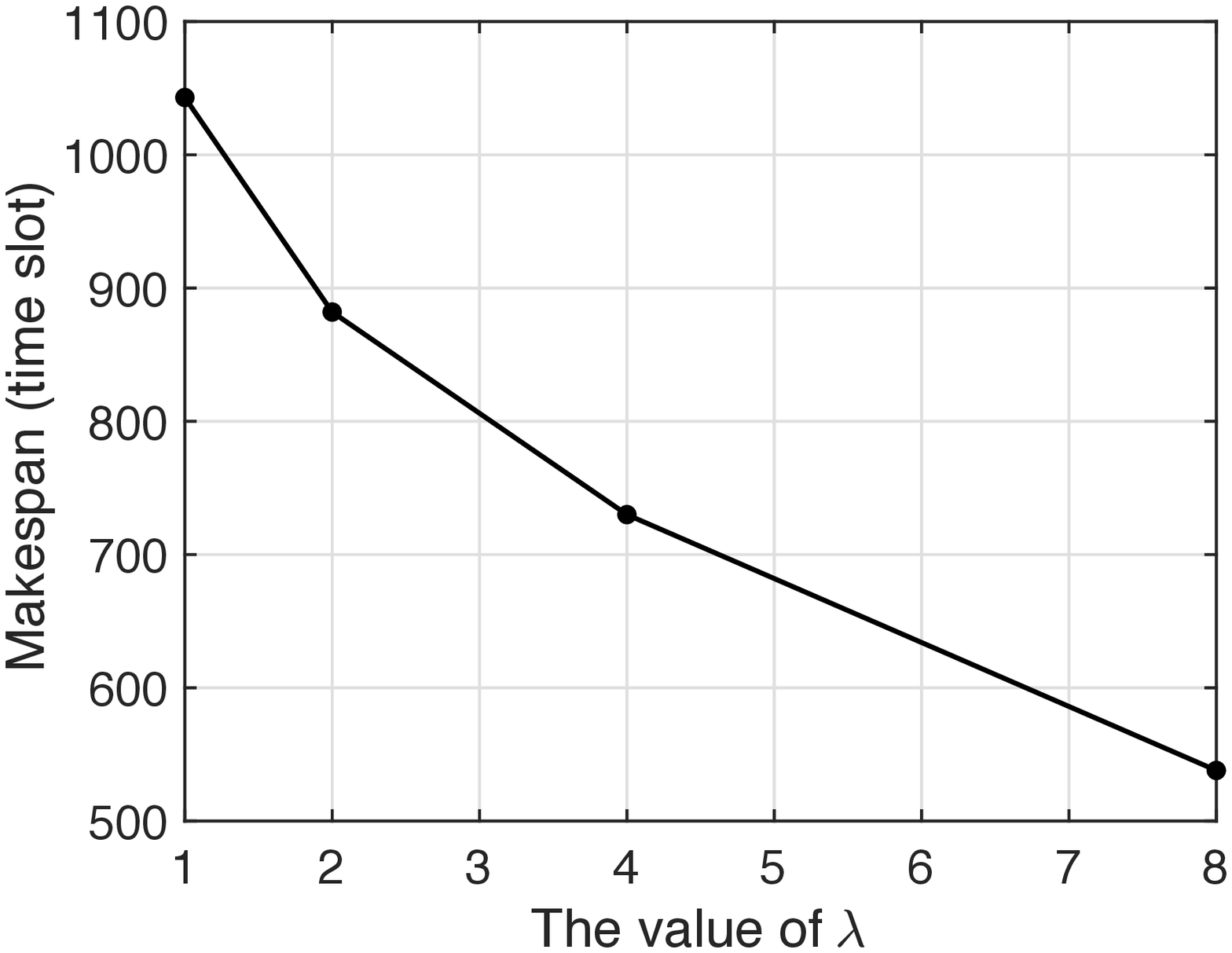}
        \caption{Impact of the value of $\lambda$ on makespan.}\label{fig:lambda}
    \end{minipage}%
\vspace{-.1in}
\end{figure*}

\section{Numerical Results}
\label{sec:numerical}

In this section, we conduct simulation studies to evaluate the performance of our proposed SJF-BCO algorithm.

\smallskip
\textbf{1) Experiment Settings:} 
Similar to the setting in~\cite{Wang20:contention}, the workload is generated  based on the Microsoft job trace~\cite{Jeon19:microsofttrace}.
We generate 160 DDL jobs by scaling down the original job trace~\cite{Jeon19:microsofttrace} following the job-type distribution, where there are 80 single-GPU jobs, 14 2-GPU jobs, 26 4-GPU jobs, 30 8-GPU jobs, 8 16-GPU jobs, and 2 32-GPU jobs.
We set $F_j\in[1000,6000]$.
The extra time cost brought by communication contention and overhead is within 15\% of the total actual execution time.
We let $\xi_1 = \xi_2$ (cf. Sec.~\ref{subsec:sys_model}) to make communication contention and overhead cost comparable.
We set $\tau_j[t]\in[0.01, 0.05]$~\cite{Yu21:PD-ORS}, and $\lambda_j = 1, \forall j$.
We set the estimated execution time $\hat{\rho}(\y^k)\in[50, 300]$ (evaluated from the product of $\tau_j[t]$ and $F_j$).
The GPU cluster has 20 servers.
The number of GPUs on each server is chosen from $\{4, 8, 16, 32\}$ uniformly at random.

\smallskip
\textbf{2) Baselines for Comparison:} We compare our algorithm with three representative job scheduling algorithms: First-Fit (FF)~\cite{Stavrinides11:FF-LS}, List-Scheduling (LS)~\cite{Stavrinides11:FF-LS}, and Random (RAND)~\cite{Wang20:contention}.
Here, we define $\theta_u^{f}$ as the maximum execution time limit returned by the scheduling policy $f$. 
Given a job $j$,
FF picks the first $G_j$ available GPUs such that their accumulative execution time does not exceed the limit $\theta_u^{FF}$, from server to server.
This policy tends to pack different jobs into the fewest number of servers to avoid fragmentation introduced by small jobs, which can save space for large jobs to be scheduled next.
LS selects top-$G_j$ GPUs with least execution time first, so that the accumulative execution time does not exceed the limit $\theta_u^{LS}$.
Note that this policy may introduce high communication overhead since it may choose GPUs from a large number of servers.
Further, LS tries to balance the execution time between GPUs by always selecting the one with the least execution time.
RAND randomly chooses servers and GPUs to schedule jobs.
In this policy, we allocate GPUs to a job as long as it does not exceed $T$, i.e., we set $\theta_u^{RAND} = T$, to avoid the long running time in order to find a feasible schedule.

\smallskip
\textbf{3) Experiment Results:}
First, we compare the makespan performance achieved by our SJF-BCO algorithm with those of the baseline  policies.
We set $T=1200$.
As shown in Fig.~\ref{fig:baseline}, SJF-BCO outperforms other scheduling policies both in terms of makespan and average job completion times, implying that SJF-BCO is also superior in terms of total job completion time.
Note that SJF-BCO tends to open new server(s) for large jobs to avoid the large communication overhead and use shared servers for small jobs to avoid the fragmentation, thus achieving better average completion time and makespan than FF and RAND.
Note that SJF-BCO has more prominent advantages over these baselines when the cluster has limited GPU resources.

Then, we examine the impact of $\kappa$ on the makespan in our proposed SJF-BCO algorithm.
We set $T=1200$, and select $\kappa$ from 1 to 32.
As indicated in Fig.~\ref{fig:kappa}, as the value of $\kappa$ increases, the makespan first drops and then increases and then drops again.
Recall that in Algorithm~\ref{alg:place}, FA-FFP is used when the number of requested GPUs $G_j\leq\kappa$; otherwise LBSGF is used.
Note that before Turning point 1 in Fig.~\ref{fig:kappa}, as $\kappa$ increases, the makespan drops since more small jobs are packed into the fewest number of shared servers, resulting in decrement of communication contention and overhead introduced by larger jobs to be scheduled later.
However, as $\kappa$ continues to grow, communication contention becomes more noticeable since more large jobs are scheduled to the shared servers, leading to the increase of makespan.
Finally, as $\kappa$ becomes sufficiently large, then the majority or even all jobs use shared servers to schedule their workers, which can slightly decrease the communication overhead due to the smaller resultant ring-span (see Turning point 2 in Fig.~\ref{fig:kappa}).


Next, we investigate the influence of communication contention by reducing the number of servers.
We set $T=1500$.
Intuitively, the larger number of servers, the less communication contention.
As we can see from Fig.~\ref{fig:contention}, as we increase the number of servers from 10 to 20, the makespan of FF, LS and SJF-BCO decrease due to the degradation of contention level.
Note that, if enough resources are available in the cluster, then each job will have a separate set of servers using SJF-BCO, i.e.,  its performance will become better as number of GPUs increases.
In this case, no communication contention will be introduced using SJF-BCO.
The intuition that FF has the largest makespan reduction is that the average idle time for workers drops dramatically since a smaller execution time limit could be set as the number of servers increases.

Lastly, we inspect the influence of $\lambda$ on the makespan for SJF-BCO with $\lambda \in\{1, 2, 4, 8\}$ and $\kappa=1$.
As we can see from Fig.~\ref{fig:lambda}, the makespan monotonically decreases as the $\lambda$ increases.
Recall that a larger $\lambda$-value implies a larger number of servers could be selected.
Then, the job has a higher chance to open new servers to schedule its workers, resulting in less communication contention and a smaller communication overhead.
Interestingly, $\lambda$ plays a similar role as $\kappa$, with the aim to balance communication overhead and contention.
Specifically, $\kappa$ affects the overall balance between all jobs since it determines the portion of jobs to use either FA-FFP or LBSGF, while $\lambda$ focuses more on the balance between communication contention and overhead for a specific job that uses LBSGF to schedule.


\section{Conclusion}
\label{sec:conclusion}

In this paper, we studied resource scheduling for DDL jobs in a multi-tenant GPU cluster, where we considered the communication contention and overhead determined by the distribution of workers.
We showed that this problem can be formulated as a highly non-trivial non-linear integer program with nonconvex and mixed packing-covering constraints.
We then converted the problem into a tractable integer linear program, which enables the design of approximation algorithms.
Specifically, we developed a new analytical model that jointly considers the placements and starting times of the workers of each DDL job.
Through careful reformulation, we then transformed the problem into an integer linear program with a more tractable structure, and proposed an approximation algorithm with an approximation ratio performance guarantee.
We provided rigorous theoretical analysis and conducted experiments to demonstrate the efficacy of our algorithms.
Collectively, our results contribute to a fundamental understanding on resource scheduling for DDL jobs in multi-tenant GPU clusters.

\section{Acknowledgements} \label{sec:ack}
J. Liu’s work has been supported in part by NSF grants CAREER CNS-2110259, CNS-2112471, CNS-2102233, CCF-2110252, and a Cisco Systems Research Grant GR127298.
B. Ji’s work has been supported in part by NSF CNS-2112694.
H. Rajan's work has been supported in part by NSF 21-20448 and NSF 19-34884.
\bibliographystyle{acm}
\bibliography{./BIB/DMLF,./BIB/ApproxAlg, ./BIB/CSMA}


\end{document}